\documentclass[10pt,conference]{IEEEtran}

\usepackage{braket}
\usepackage{amsmath,amssymb,amsfonts,dsfont,amsthm}
\usepackage{algorithmic}
\usepackage{graphicx}
\usepackage{textcomp}
\usepackage{xcolor}
\usepackage{comment}
\usepackage{cite}

\newcommand\blfootnote[1]{%
  \begingroup
  \renewcommand\thefootnote{}\footnote{#1}%
  \addtocounter{footnote}{-1}%
  \endgroup
}

\usepackage{flushend}

\newtheorem{theorem}{Theorem}
\newtheorem{corollary}[theorem]{Corollary}

\newtheorem{lemma}[theorem]{Lemma}
\newtheorem{definition}[theorem]{Definition}
\newtheorem{proposition}[theorem]{Proposition}

\newlength{\blank}
\newcommand{\RR}{\mathbb{R}}
\newcommand{\NN}{\mathbb{N}}

\newcommand{\cB}{\mathcal{B}}

\newcommand{\cD}{\mathcal{D}}

\newcommand{\cG}{\mathcal{G}}

\newcommand{\cN}{\mathcal{N}}

\newcommand{\cT}{\mathcal{T}}

\newcommand{\cX}{\mathcal{X}}
\newcommand{\cY}{\mathcal{Y}}

\newcommand{\pau}[1]{{\color{purple}#1}}

\begin{document}

%\title{Galaxy codes: To infinity and beyond}
\title{Optimal Codes for Deterministic Identification over Gaussian Channels: Closing the Capacity Gap}
%IF MY BERNOULLI IDEA WORKS (GALAXY INSIDE A CUBE) THEN THE TITLE COULD BE JUST: 
%\title{Optimal Galaxy Codes: Closing the Rate Gap in Deterministic Identification}

\author{\IEEEauthorblockN{Pau Colomer}
\IEEEauthorblockA{\textit{Chair of Theoretical}\\
\textit{Information Technology}\\
\textit{Technische Universit\"at}\\
\textit{M\"unchen, Germany}\\
pau.colomer@tum.de}
\and
\IEEEauthorblockN{Christian Deppe\textsuperscript{1}}
\IEEEauthorblockA{\textit{Institute for Communications}\\
\textit{Technology}\\
\textit{Technische Universit\"at}\\
\textit{Braunschweig, Germany}\\
christian.deppe@tu-bs.de}
\and
\IEEEauthorblockN{Holger Boche\textsuperscript{1,2}}
\IEEEauthorblockA{\textit{Chair of Theoretical}\\
\textit{Information Technology}\\
\textit{Technische Universit\"at}\\
\textit{M\"unchen, Germany}\\
%\textit{6G-life \& MCQST \& MQV}\\
boche@tum.de}
\and
\IEEEauthorblockN{Andreas Winter\textsuperscript{3}}
\IEEEauthorblockA{\textit{Department Mathematik/Informatik}\\
\textit{Abteilung Informatik}\\
\textit{Universit\"at zu K\"oln}\\ 
\textit{Germany}\\
andreas.winter@uni-koeln.de}
}

\date{18 February 2026} 

\maketitle

\pagestyle{plain}

\begin{abstract}
Deterministic identification (DI) has emerged as a promising paradigm for large-scale and goal-oriented communication systems. Despite significant progress, a fundamental open problem has remained unresolved: a persistent gap between the best known lower and upper bounds on the DI capacity, as well as on the corresponding rate–reliability tradeoff bounds. 
In this paper, we finally close this gap for Gaussian channels $\cG$ by constructing an optimised code that achieves the known upper bound. This allows us to establish that the linearithmic capacity for deterministic identification is $\dot C_{\text{DI}}(\cG)=\frac12$. Furthermore, we analyse the rate–reliability tradeoff and show that the proposed scheme matches the known upper bounds to first order, thereby closing the existing gap in reliability performance for all admissible error decay regimes. Finally, we demonstrate the existence of an optimum universal code, which does not require knowledge of the channel parameters and yet achieves capacity.
\end{abstract}

\begin{IEEEkeywords}
Post-Shannon theory;
message identification;
Gaussian channel;
reliability function;
universal codes.
\end{IEEEkeywords}

\blfootnote{
{\textsuperscript{1}6G-life, 6G research hub,
Braunschweig and M\"unchen,
Germany.}\vspace{1pt}

{\textsuperscript{2}Cluster of Excellence, CeTI, Technische Universität Dresden, Germany.}\vspace{1pt}

{\textsuperscript{3}ICREA {\&} Universitat Aut\`onoma de Barcelona, Barcelona, Spain.}\vspace{1pt}
}

\section{Introduction}\label{sec:intro}
\subsection{Motivation and Background}\label{ssec:motivation}
As communication systems grow in size and complexity, supporting massive numbers of users, devices, and interactions becomes a central challenge. In such large-scale settings, efficient protocols must go beyond the classical Shannon paradigm of reliable message transmission \cite{Shannon:TheoryCommunication} and instead focus on accomplishing specific tasks with minimal resource expenditure. Within this perspective, goal-oriented communication paradigms, where the objective is not to reconstruct all transmitted data but to enable particular decisions or actions at the receiver, arise as a promising foundation for next-generation networks \cite{Goal_Oriented_Coms,6G_Book,6G_Book_2}.
In particular, the \emph{identification} problem, introduced in \cite{AD:ID_ViaChannels}, provides a framework that departs fundamentally from Shannon’s transmission paradigm and offers the potential for dramatic gains in communication efficiency. Instead of reconstructing the transmitted message, the receiver selects a particular message $m$ and is only required to decide reliably whether the sent message equals $m$ or not. Thus, for each possible message, only a dedicated binary hypothesis test needs to be performed, and its outcome either confirms or rejects the queried identity. 

While the size of the $M$ messages that can be transmitted grows linearly with the number $n$ of channel uses $\log M\sim nR$, it can grow exponentially for the $N$ messages that can be identified $\log N\sim 2^{nR}$ \cite{AD:ID_ViaChannels, HanVerdu:ID, RI_gaussian, Wiretap_ID, broadcast_ID, hashing_ID, CR_ID}. In other words, with the same number of channel uses, one can identify exponentially more messages than can be transmitted under Shannon’s paradigm. This striking gain, however, relies crucially on the use of stochastic encoders, which are undesirable or infeasible in many settings. It is therefore natural to investigate \emph{deterministic identification (DI)}, where randomization is not allowed.

It was observed early on that DI over discrete memoryless channels (DMCs) could only achieve linearly scaling codes, $\log N_{DMC}\sim nR$ \cite{AD:ID_ViaChannels,AC:DI}, albeit with rates higher than those achievable for transmission \cite{SPBD:DI_power}; in contrast, for some continuous-alphabet channels, the scaling can be superlinear, in fact \emph{linearithmic}: $\log N\sim R\,n\log n$ \cite{SPBD:DI_power,DI-poisson_mc,DI-fading}. This behaviour was later shown to be a universal feature of DI codes over general memoryless channels with continuous inputs and discrete outputs \cite{CDBW:DI_classical}, thereby establishing a broad class of channels for which DI asymptotically supports larger message sets than Shannon-style transmission schemes. 

A central theoretical open problem in this field is the persistent gap observed between the lower and upper bounds on the DI linearithmic rates 
\(\dot{R} := \frac{\log N}{n \log n}\) and the corresponding capacities 
\(\dot{C}_{\text{DI}}\) (the supremum of all achievable rates) over all channels with continuous input. In particular, in the general study of DI over channels \(W\) with discrete outputs \cite{CDBW:DI_classical}, the following gap was observed in the capacity bounds:
\begin{equation}\label{eq:C_general}
\frac{1}{4} d_M \leq \dot{C}_{\text{DI}}(W) \leq \frac{1}{2} d_M,
\end{equation}
where \(d_M\) is a dimension parameter (the Minkowski dimension) characterizing the geometry of the channel output.

Gaussian channels \(\mathcal{G}\), which have continuous outputs and therefore fall outside the discrete-output framework of \cite{CDBW:DI_classical}, were shown to exhibit a similar gap
\(
\frac{1}{4} \leq \dot{C}_{\text{DI}}(\mathcal{G}) \leq \frac{1}{2}
\)
in \cite{SPBD:DI_power,DI-fading}. Recently, a new code construction \cite{galaxy-codes} tightened the lower bound to 
\begin{equation}\label{eq:C_galaxy_old}
\frac38\leq\dot{C}_\text{DI}(\cG)\leq\frac12.
\end{equation}
In this paper, we close this gap by constructing an optimised code, inspired by \cite{galaxy-codes}, that achieves the known upper bound.
Furthermore, in Section~\ref{ssec:universal}, we will show that the capacity can also be achieved with a universal code, i.e., a code that does not depend on channel parameters. That means it is asymptotically optimal for any possible Gaussian channel.

%%%%%%%%%%%%%%%%%%%%%

While capacity results characterize the largest asymptotically achievable rates, they do not capture the finer structure of error behaviour. In particular, they provide no information about how rapidly the error probabilities decay as a function of the block length. This aspect is of fundamental importance in practical communication systems, where one typically encounters a tradeoff between the transmission rate and the reliability, quantified through the decay exponent of the error probabilities.

For DI, the rate–reliability tradeoff was first analysed for channels with continuous input and discrete output in \cite{CDBW:Reliability-TCOM, DI-steins}, and then extended to general linear Gaussian channels in \cite{RRGauss-arXiv}. A common feature of these works is the presence of a non-vanishing gap, analogous to that in Equation~\eqref{eq:C_general}, between the lower and upper bounds on the rate–reliability function.

In Section~\ref{sec:RR_tradeoff}, we study the rate–reliability tradeoff of the coding scheme proposed in Section~\ref{sec:main_results}. We show that the upper bound derived in \cite{RRGauss-arXiv} is achieved to first order in the regime of small error exponents. This establishes that the proposed construction is not only capacity-achieving in the asymptotic sense, but also first-order optimal with respect to the reliability function across the admissible error scaling regimes that permit a linearithmic behaviour.

\subsection{Problem setting and preliminaries}\label{ssec:setting}
We consider memoryless point-to-point communication over $n$ uses of the 
\emph{additive white Gaussian noise (AWGN) channel}. The AWGN channel occupies a central role in information theory and communications \cite{AWGN_IT, AWGN_radio, AWGN_wireless}. It serves as the canonical model for thermal noise in electronic systems and provides an accurate abstraction for a wide range of practical communication scenarios. Moreover, many more complex channel models reduce to Gaussian noise models in high-dimensional regimes or after appropriate processing. For these reasons, understanding the fundamental limits of DI over the AWGN channel is not only of theoretical interest, but also essential for potential applications of identification-based communication paradigms in realistic systems.

Formally, the channel acts on an input sequence $x^n=(x_1,\dots,x_n)\in\mathbb{R}^n$ and produces an output sequence $y^n\in\mathbb{R}^n$, which is a realization of the following sequence of random variables
\begin{equation}\label{eq:AWGN}
Y^n = x^n + \sigma Z^n,
\end{equation}
where $Z^n=(Z_1,\dots,Z_n)$ has independent and identically distributed (i.i.d.) components $Z_i \sim \mathcal{N}(0,1)$ and $\sigma>0$ denotes the noise standard deviation. Equivalently, the conditional output distribution is
\begin{equation}\label{eq:output_distr}
 \cG_{\vec{x}}:=\mathcal{G}(\cdot \mid x^n) = \mathcal{N}(x^n,\sigma^2 I_n),
\end{equation}
where $I_n$ is the $n\times n$ identity matrix.

In practical communication systems, arbitrarily large signal amplitudes 
cannot be transmitted. This physical limitation is captured by a block 
power constraint $P>0$, requiring that every admissible input sequence 
satisfy
\begin{equation}\label{eq:power_constraint}
\|x^n\|^2 \le nP,
\end{equation}
where $\|x^n\| = \left(\sum_{i=1}^n x_i^2\right)^{1/2}$ is the Euclidean norm. 
Thus, the set of admissible inputs is the closed $n$-dimensional ball of radius $\sqrt{nP}$ centred at the origin $o=(0,\dots,0)$:
\begin{equation}
\mathcal{B}_{o}(\sqrt{nP},n) 
:= \left\{ x^n \in \mathbb{R}^n : \|x^n\| \leq \sqrt{nP} \right\}. 
\end{equation}
For notational convenience, and in view of the geometric nature of our 
arguments, we define $\vec{x}\in\mathbb{R}^n$ as the vector that goes from the origin $o$ to the point $x^n$. Moreover, we will take advantage of this duality and sometimes use (or rather, abuse) the vector notation even when referring to words and sequences.

The Gaussian cumulative distribution function is denoted for all $x\in\RR$ by:
\begin{equation}\label{eq:comulative}
    \Phi(x)=\int_{-\infty}^x\frac{1}{\sqrt{2\pi}}e^{-z^2/2}dz.
\end{equation}

\begin{definition}\label{def:DI_code}
An $(n,N,\lambda_1,\lambda_2)$ DI code over $n$ uses of an AWGN channel $\cG$ is a family of pairs $\{(u_i,\cD_i) : i\in[N]\}$ with $u_i\in\cX^n$ input sequences and $\cD_i\subset\cY^n$ output regions such that for all distinct $i, j\in [N]$, and $\lambda_1,\lambda_2>0$:
\[
  \cG_{u_i}(\cD_i)\ge 1-\lambda_1 \quad\text{and}\quad \cG_{u_i}(\cD_j)\le\lambda_2.
\]
\end{definition}

The parameters $\lambda_1$ and $\lambda_2$ bound respectively the two errors that can be committed in an identification scheme: the first type is a \emph{missed identification}, when the receiver wants to identify the message sent but the outcome of the test is negative; and the second is a \emph{false identification}, when the messages are different but the test is positive. To avoid trivialities, it is also assumed that $\lambda_1+\lambda_2<1$.

We say that an $(n,N,\lambda_1,\lambda_2)$ DI code is \emph{good} if the errors vanish for sufficiently large $n$, i.e. $\lim_{n\to\infty}\lambda_1,\lambda_2 = 0$.
A number $\dot{R}> 0$ is said to be an \emph{achievable linearithmic rate} if for every $\epsilon>0$
there exists an $n_0\in\mathbb{N}$ and a sequence of good
$(n,N^{(n)},\lambda_1^{(n)},\lambda_2^{(n)})$ DI codes defined for all $n\ge n_0$, 
such that
\begin{equation}
\lim_{n\to\infty}\lambda_1^{(n)} = 0,
\,
\lim_{n\to\infty}\lambda_2^{(n)}=0\,\,\, \text{and}\,\,\,\frac{\log N^{(n)}}{n\log n} \ge \dot{R}-\epsilon.
\end{equation}

The linearithmic capacity of the channel is thus defined as
\begin{equation}\label{eq:def_capacity}
\dot{C}_{\mathrm{DI}}(\cG)=
\sup\bigl\{\dot{R}\,:\,\dot{R}\text{ is achievable}\bigr\}.
\end{equation}

Finally, let $\vec{a},\vec{v}\in\RR^n$ be two vectors and let $\|\vec{a}\|>0$, then the orthogonal projection of $\vec{v}$ onto $\vec{a}$ is denoted as
\begin{equation}\label{eq:proj_definition}
  \Pi_{\vec{a}}\vec{v} := \frac{(\vec{a},\vec{v})}{\|\vec{a}\|^2} \vec{a},
\end{equation}
with $(\vec{a},\vec{v})=\sum_{i=1}^na_iv_i$ the inner product of the two vectors.

\subsection{Intuition for the new construction}\label{ssec:intuition}
As the construction of our code is technically involved, we include here a discussion of the underlying strategy and how it differs from previous approaches. Our goal is to provide the geometric and conceptual intuition that guides the design, so that the subsequent technical lemmas and parameter choices appear as natural consequences of an optimal geometric configuration, rather than as ad hoc artifacts of the analysis.

\medskip

\noindent\textbf{Typicality-based constructions.} 
The classical DI code constructions for Gaussian channels \cite{SPBD:DI_power,DI-fading,RRGauss-arXiv} — and, more generally, the abstract constructions in \cite{CDBW:DI_classical,CDBW:Reliability-TCOM} — are based on typicality arguments: the receiver tests whether the output is in a typical set corresponding to the tested word. This occurs with high probability if the tested word was actually sent. If two input sequences are very close, their corresponding typical sets are too similar, making them indistinguishable. To ensure distinguishability between inputs, a minimal pairwise distance of order $\sim n^{\frac14}$ is required. In other words, the sequences distinguishable by the typicality test form a packing in the input space. Optimising this approach yields linearithmic DI rates, but only up to the prefactor $\frac14$ in the Gaussian case.

In this framework, an error can occur when the noise vector has atypically large or small norm. Then, the output falls outside the typical set of the sent word $u$ (generating a missed identification if the receiver tests $u$), and it can fall inside a different typical set corresponding to a word $v$ (and that will generate a false identification if the receiver tests for $v$). A false identification can also happen even with typical error behaviour due to the fact that the typical sets corresponding to different input words have small overlaps.

\medskip

\noindent\textbf{Projective constructions:}
A different principle was introduced in \cite{galaxy-codes} leading to an improved lower bound of $\frac38$. The following phenomenon of random Gaussian vectors in high dimensions was used: with high probability, a random Gaussian noise vector is almost orthogonal to any fixed direction.

%Although this property may seem counter-intuitive in low dimensions, it becomes natural in high dimensions. A random vector whose energy is spread across many coordinates typically has small projection onto any specific direction. Equivalently, while the total noise energy may be moderate, its component along a prescribed direction is typically much smaller.

This observation suggests replacing typicality tests, which depend on the total size of the noise vector, by \emph{projective tests}, which care only about the contributions onto selected directions. By enforcing sufficient separation between the projections of different code words onto these directions, one can distinguish them with high probability, because the contribution of the added noise onto the projected direction is likely small:

\begin{proposition}\label{prop:projection}
    Let $\vec{u}\in\RR^n$ be a vector in a high dimensional space and $\vec{Z}=(Z_1,\dots,Z_n)$ a random noise vector where $Z_1,\dots,Z_n\sim\cN(0,1)$ are i.i.d. standard normal random variables. The projection of the noise vector onto  $\vec{u}$ satisfies:
    \[
    \Pr\left(\|\Pi_{\vec{u}}\vec{Z}\|\geq x\right)=2\Phi(-x)\leq\frac{1}{\sqrt{2\pi}}\frac{1}{x}e^{-x^2/2}.
    \]
\end{proposition}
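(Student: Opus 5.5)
The plan is to reduce the claim to a one-dimensional Gaussian computation via rotational invariance, and then apply the standard Mills-ratio bound.

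\textbf{Step 1 (reduction to one dimension).} By the definition of the projection \eqref{eq:proj_definition}, with $\vec{e}:=\vec{u}/\|\vec{u}\|$ we have
\[
\Pi_{\vec{u}}\vec{Z}=(\vec{e},\vec{Z})\,\vec{e}, \qquad \|\Pi_{\vec{u}}\vec{Z}\|=|(\vec{e},\vec{Z})|.
\]
Write $S:=(\vec{e},\vec{Z})=\sum_{i=1}^n e_iZ_i$. This is a linear combination of independent standard normals, so it is Gaussian with mean $0$ and variance $\sum_i e_i^2=1$; hence $S\sim\cN(0,1)$. Equivalently, one can complete $\vec{e}$ to an orthonormal basis and use the rotational invariance of the standard Gaussian vector. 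Either way, the result does not depend on $\vec{u}$ or on $n$.

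\textbf{Step 2 (the equality).} By the symmetry of $\cN(0,1)$ and the definition \eqref{eq:comulative},
\[
\Pr(|S|\ge x)=\Pr(S\ge x)+\Pr(S\le -x)=2\Phi(-x),
\]
since the event $|S|=x$ has probability zero.

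\textbf{Step 3 (the tail bound, for $x>0$).} This is the standard Mills-ratio bound, and I expect it to be the only step requiring any care. On the range $z\ge x>0$ we have $z/x\ge 1$, so
\[
\Phi(-x)=\int_x^\infty \frac{1}{\sqrt{2\pi}}e^{-z^2/2}\,dz\le \int_x^\infty \frac{z}{x}\,\frac{1}{\sqrt{2\pi}}e^{-z^2/2}\,dz=\frac{1}{\sqrt{2\pi}}\,\frac{1}{x}\,e^{-x^2/2}.
\]

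\textbf{Caveat on the constant.} Multiplying by $2$ gives $2\Phi(-x)\le \sqrt{2/\pi}\,x^{-1}e^{-x^2/2}$. This is weaker than the stated constant $\tfrac{1}{\sqrt{2\pi}}$ by a factor of $2$, and the stated constant is false as written: the Mills ratio is asymptotically tight, so $2\Phi(-x)\sim \sqrt{2/\pi}\,x^{-1}e^{-x^2/2}$ as $x\to\infty$. The plan is therefore to prove the bound with the corrected constant $\sqrt{2/\pi}$. Alternatively, the paper's constant holds for the one-sided event $\Pr(S\ge x)=\Phi(-x)$. The constant factor is immaterial for the asymptotic use in the rest of the paper.
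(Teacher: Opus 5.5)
The paper gives no proof of this proposition. It is stated in the introduction as a standard fact and invoked thereafter, so there is no competing route to compare with. Your argument is the standard proof and is correct: you reduce $\|\Pi_{\vec u}\vec Z\|$ to $|S|$ with $S=(\vec u/\|\vec u\|,\vec Z)\sim\cN(0,1)$, use symmetry for the equality, and apply the Mills-ratio estimate $\Phi(-x)\le\frac{1}{\sqrt{2\pi}}\frac{1}{x}e^{-x^2/2}$ for $x>0$.

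Your caveat about the constant is also correct and should be kept. The displayed inequality with constant $\frac{1}{\sqrt{2\pi}}$ is false as stated. For example, $2\Phi(-1)\approx 0.317$ while $\frac{1}{\sqrt{2\pi}}e^{-1/2}\approx 0.242$. The lower Mills bound $\Phi(-x)\ge\frac{x}{1+x^2}\frac{1}{\sqrt{2\pi}}e^{-x^2/2}$ shows it fails for every $x>1$, in particular in the regime $x=\log n$ where the paper uses it.

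The paper itself silently switches to your corrected constant when it applies the proposition in the proof of Theorem~\ref{thm:single_layer}, writing $2\Phi(-\log n)\le\sqrt{2/\pi}\,(\log n)^{-1}e^{-(\log n)^2/2}$. All later uses go through with the $\sqrt{2/\pi}$ version, so the factor of $2$ is harmless for the paper's results.
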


In this setting, the errors occur when the noise contribution onto a selected direction is too large, causing the projection of the output sequence to be far away from the projection of the sent word, beyond a certain acceptance threshold.

The model proposed in \cite{galaxy-codes} applies the property described above recursively. To accept a word, its orthogonal projection onto several carefully chosen directions must be simultaneously small. More precisely, they introduce a fractal, galaxy-like construction: points are distributed on the surface of a sphere; each of these points serves as the centre of a smaller sphere containing a new layer of points; these, in turn, act as centres of yet smaller spheres, and so on. The elements of the innermost layer are the code words. This concrete hierarchical structure allows one to select particular projection directions for each word such that they satisfy pairwise strong separation properties in projection and can therefore be reliably distinguished. A combination of this new construction and the earlier typicality-based one achieves a linearithmic rate $\dot R=\frac38$, improving the previously known $\dot R=\frac14$ when only typicality-based decoding was used.

We begin in Section~\ref{sec:pre-results} by showing that a significantly simpler projective construction can already achieve the same linearithmic rate $\dot R=\frac38$. Building on the geometric intuition developed from this example, we then demonstrate in Section~\ref{sec:main_results} that an optimised geometric construction can in fact attain a linearithmic rate of $\frac{1}{2}$, thereby closing the gap for DI over Gaussian channels. The results are completed in Section~\ref{sec:RR_tradeoff} with a study of the tradeoff between rates and errors. We conclude with a discussion in Section~\ref{sec:conclusions}.

%%%%%%%%%%%%%%%%%%%%%%%%
%%%%%%%%%%%%%%%%%%%%%%%%
%%%%%%%%%%%%%%%%%%%%%%%%
%%%%%%%%%%%%%%%%%%%%%%%%

\section{Preliminary results}\label{sec:pre-results}
We provide here the necessary tools and intuition needed to tackle the main results in the subsequent sections. We start with some geometric considerations in Section~\ref{ssec:geometry} that will ease the subsequent calculations. Section~\ref{ssec:single-layer} contains the initial results of the paper: we show that a single-layer construction, much simpler than the multi-layer galaxy construction in \cite{galaxy-codes}, can achieve the same rate and with better error bounds. Finally, in Section~\ref{ssec:intuition}, we present a key observation and result that suggests how to improve the rates further.

\subsection{Geometric considerations}\label{ssec:geometry}
We include in this section several geometric notions that will play a central role in the code constructions. In particular, we consider arrangements of points on the surface of a sphere satisfying a minimum separation condition. While this is equivalent to the classical notion of a packing on the sphere surface, we formulate it in terms of angular separation and projected distance, which arise more naturally in our constructions.

\begin{definition}\label{def:angle-dense}
Given $d>0$, an arrangement of points $\{u_j\}_{j=1}^{N}$ on the surface of a ball $\cB_{\vec o}(r,n)$ is called $d$-\emph{angle-dense} if for every pair of distinct points $u_j\neq u_k$ it holds that
\[
\|\Pi_{\vec{u}_k}\vec{u}_j-\vec{u}_k\|\geq d .
\]
\end{definition}

\begin{lemma}\label{lemma:angle}
The minimum separation angle $\theta$ between any two different points $u_j\neq u_k$ in a $d$-angle-dense arrangement on the surface of the $r$-radius ball $\cB_{\vec o}(r,n)$ satisfies
\[
\sin\frac{\theta}{2} \geq \sqrt{\frac{d}{2r}} .
\]
\end{lemma}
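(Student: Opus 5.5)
Fix two distinct points $u_j,u_k$ of the arrangement and let $\theta_{jk}$ be the angle between $\vec u_j$ and $\vec u_k$. Both lie on the sphere of radius $r$, so $\|\vec u_j\|=\|\vec u_k\|=r$ and $(\vec u_k,\vec u_j)=r^2\cos\theta_{jk}$. Substituting into the projection formula \eqref{eq:proj_definition} gives
\[
\Pi_{\vec u_k}\vec u_j=\frac{r^2\cos\theta_{jk}}{r^2}\,\vec u_k=\cos\theta_{jk}\,\vec u_k .
\]
Hence
\[
\|\Pi_{\vec u_k}\vec u_j-\vec u_k\|=|\cos\theta_{jk}-1|\,r=r(1-\cos\theta_{jk}).
\]
Geometrically, this is the distance along the axis $\vec u_k$ between the tip of $\vec u_k$ and the foot of the perpendicular dropped from $u_j$.

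Next I apply the half-angle identity $1-\cos\theta_{jk}=2\sin^2(\theta_{jk}/2)$. The $d$-angle-dense condition of Definition~\ref{def:angle-dense} then becomes
\[
2r\sin^2\frac{\theta_{jk}}{2}\ge d,\qquad\text{that is,}\qquad \sin\frac{\theta_{jk}}{2}\ge\sqrt{\frac{d}{2r}} .
\]
Taking the positive square root is legitimate because $\theta_{jk}\in[0,\pi]$, so $\sin(\theta_{jk}/2)\ge 0$.

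This holds for every distinct pair, so in particular it holds for the pair realizing the minimum angle $\theta$. If the arrangement is infinite, the same bound applies to the infimum, since every pairwise angle satisfies it. This proves the lemma.

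No step is a real obstacle. The only point needing care is the sign of $1-\cos\theta_{jk}$ when removing the absolute value, which is automatic since $1-\cos\theta_{jk}\ge0$. It is also worth noting as a consistency check that the condition forces $d\le 2r$, because $\sin(\theta/2)\le 1$. This corresponds to antipodal points, for which the projected distance equals $2r$.
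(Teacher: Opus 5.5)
Your proof is correct and follows the same route as the paper: both establish $\|\Pi_{\vec u_k}\vec u_j-\vec u_k\|=r(1-\cos\theta)$, deduce $\cos\theta\le 1-\frac{d}{r}$ from the angle-dense condition, and conclude with the half-angle identity. The only difference is that you derive $\Pi_{\vec u_k}\vec u_j=\cos\theta\,\vec u_k$ explicitly from the projection formula, which holds for all $\theta\in[0,\pi]$, whereas the paper reads $r\cos\theta=r-\|\Pi_{\vec u_k}\vec u_j-\vec u_k\|$ off the figure.
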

\begin{proof}
Using basic trigonometry and referring to Figure~\ref{fig:AngleDense} for visual intuition, it is clear that in a $d$-angle-dense construction
\begin{equation}
  r\cos\theta= r-\|\Pi_{\vec{u}_k}\vec{u}_j-\vec{u}_k\| \leq r-d,
\end{equation}
the inequality following from Definition \ref{def:angle-dense}. The terms can be rearranged as $\cos\theta\leq 1-\frac{d}{r}$, and by introducing the sine of half the angle $\theta$, we can complete the proof:
\[
\sin\frac{\theta}{2}=\sqrt{\frac{1-\cos\theta}{2}}\geq\sqrt{\frac{d}{2r}}.\qedhere
\]

\end{proof}

\begin{figure}
    \centering
    \includegraphics[width=0.98\linewidth]{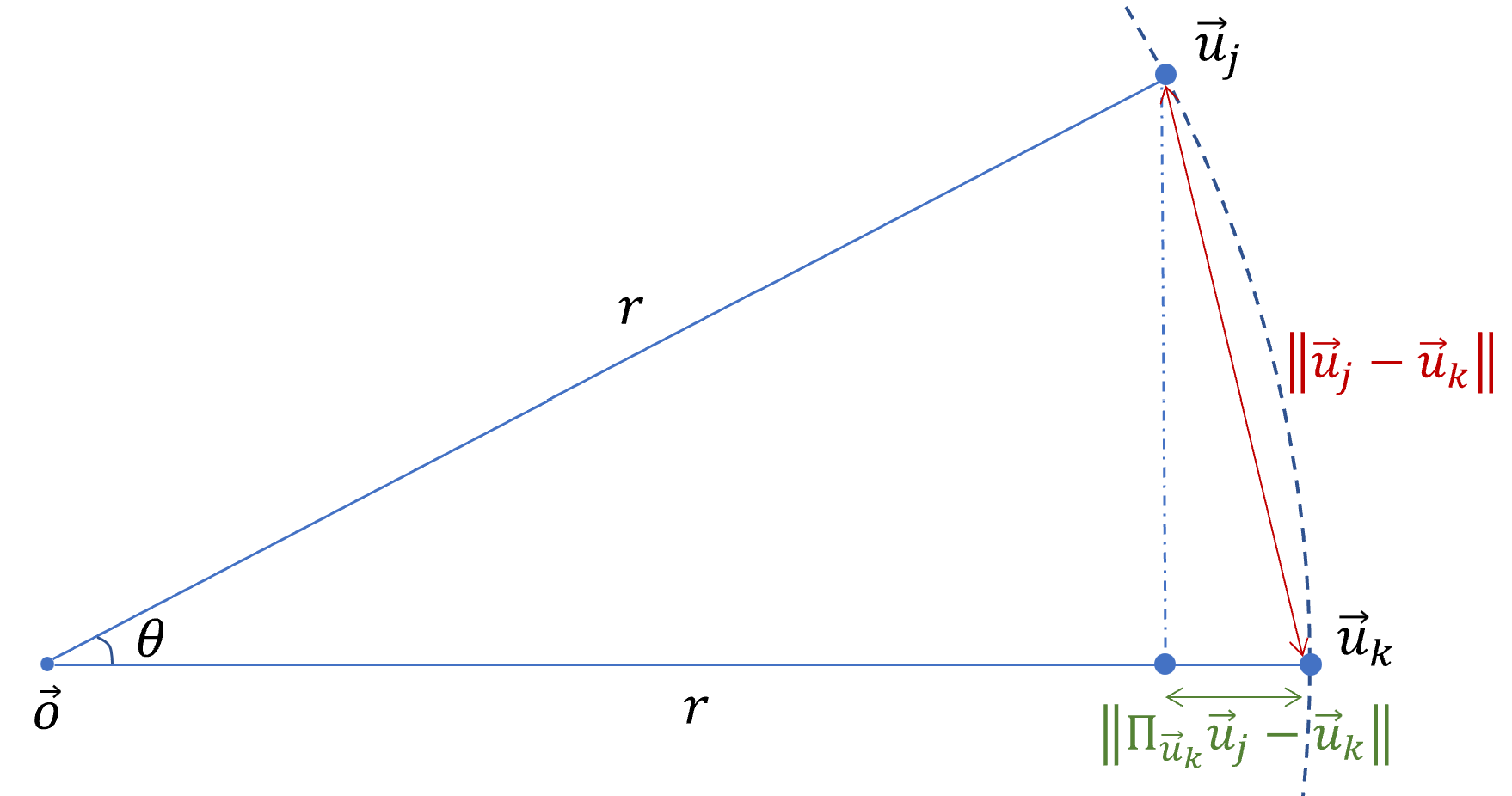}
    \caption{2D visualization of the projection of one vector on the surface of a sphere onto another neighbouring one.}
    \label{fig:AngleDense}
\end{figure}

The minimum distance between any two points (which would classically define the packing on the surface of the sphere) is given by:
\begin{lemma}\label{lemma:distance}
The Euclidean distance between any pair of distinct points $u_j\neq u_k$ in a $d$-angle-dense arrangement is
\[
\|\vec{u}_j-\vec{u}_k\|\geq\sqrt{2rd}.
\]
\end{lemma}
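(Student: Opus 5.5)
The plan is to reduce the distance bound to the angular bound of Lemma~\ref{lemma:angle}, using the chord-length formula for two points on a sphere of radius $r$.

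First, since $u_j$ and $u_k$ both lie on the surface of $\cB_{\vec o}(r,n)$, we have $\|\vec{u}_j\|=\|\vec{u}_k\|=r$. Let $\theta$ be the angle between them, so that $(\vec{u}_j,\vec{u}_k)=r^2\cos\theta$. Expanding the squared norm gives
\[
\|\vec{u}_j-\vec{u}_k\|^2=2r^2-2r^2\cos\theta=2r^2(1-\cos\theta)=4r^2\sin^2\frac{\theta}{2}.
\]
Hence $\|\vec{u}_j-\vec{u}_k\|=2r\sin\frac{\theta}{2}$.

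Second, the angle between any distinct pair is at least the minimum separation angle. The function $\sin\frac{\theta}{2}$ is increasing for $\theta\in[0,\pi]$, which is the full range of angles between vectors. We can therefore invoke Lemma~\ref{lemma:angle} to get $\sin\frac{\theta}{2}\geq\sqrt{d/(2r)}$. Substituting,
\[
\|\vec{u}_j-\vec{u}_k\|\geq 2r\sqrt{\frac{d}{2r}}=\sqrt{2rd}.
\]

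Alternatively, one can bypass the angle and argue directly from Definition~\ref{def:angle-dense}. The component of $\vec{u}_j-\vec{u}_k$ along $\vec{u}_k$ is $\Pi_{\vec{u}_k}\vec{u}_j-\vec{u}_k$, which has norm $r(1-\cos\theta)\geq d$. This yields $1-\cos\theta\geq d/r$, and plugging this into the expansion above again gives $2r^2(1-\cos\theta)\geq 2rd$.

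There is no real obstacle. The only points needing care are that the inequality in Lemma~\ref{lemma:angle} concerns the \emph{minimum} angle, so it applies to every pair, and that $\sin(\cdot/2)$ is monotone on $[0,\pi]$. Note also that the statement should be read as a lower bound ("$\geq$"), despite the wording "is".
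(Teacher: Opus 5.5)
Your proof is correct and follows the paper's route: write the distance as the chord $2r\sin\frac{\theta}{2}$ and apply Lemma~\ref{lemma:angle}. Your alternative argument via $\|\Pi_{\vec{u}_k}\vec{u}_j-\vec{u}_k\|=r(1-\cos\theta)\geq d$ is also valid; it simply inlines the proof of Lemma~\ref{lemma:angle} into the distance computation.
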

\begin{proof}
The distance $\|\vec{u}_j-\vec{u}_k\|$ is exactly the chord between the two points on the surface of the sphere and can therefore be calculated as 
\begin{equation}\label{eq:chord}
\|\vec{u}_j-\vec{u}_k\|=2r\sin\frac{\theta}{2}\geq\sqrt{2rd},
\end{equation}
where we have used Lemma \ref{lemma:angle} in the last inequality, completing the proof.
\end{proof}

\begin{proposition}\label{prop:size_bound}
The number $N$ of elements in a $d$-angle-dense arrangement on a ball $\cB_{\vec o}(r,n)$ of radius $r$ and dimension $n$ satisfies
\[
\log N \geq \frac{n}{2}\log\left(\frac{2r}{d}\right).
\]
\end{proposition}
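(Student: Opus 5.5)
The statement should be read as a claim about a \emph{maximal} $d$-angle-dense arrangement, i.e.\ one to which no further point of the sphere $\partial\cB_{\vec o}(r,n)$ can be added without violating Definition~\ref{def:angle-dense}. An arbitrary arrangement, even a single point, is trivially $d$-angle-dense, so the lower bound cannot hold without maximality. The plan is to exploit maximality through a covering argument. By Lemma~\ref{lemma:distance} the relevant length scale is the chord $\delta:=\sqrt{2rd}$. The target bound is then equivalent to
\[
N\ \ge\ \Bigl(\frac{2r}{\delta}\Bigr)^{n}=\Bigl(\frac{4r^2}{2rd}\Bigr)^{n/2}=\Bigl(\frac{2r}{d}\Bigr)^{n/2},
\]
so the task is to compare an $n$-dimensional volume at scale $r$ with one at scale $\delta/2$.

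\textbf{Step 1 (maximality gives a covering).} If the arrangement is maximal, every point $x$ on the sphere violates the separation condition against some $u_k$. By the trigonometry in the proof of Lemma~\ref{lemma:angle}, $x$ then lies in the spherical cap around $\vec u_k$ of angular radius $\theta_0$, where $\cos\theta_0=1-d/r$. Equivalently, $x$ is within chord distance $2r\sin(\theta_0/2)=\delta$ of $u_k$. Hence the caps $C_k$ of chord radius $\delta$ around the code points cover the sphere, and
\[
N\ \ge\ \frac{|\partial\cB_{\vec o}(r,n)|}{|C_1|}.
\]

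\textbf{Step 2 (bounding a cap).} I will bound the surface measure of one cap by radially thickening the picture. Take the cone over $C_k$ intersected with the shell $\{(1-\epsilon)r\le\|x\|\le r\}$. Its volume is at most that of a Euclidean ball of radius comparable to $\delta$ (or $\delta/2$, depending on how carefully the region is placed), up to the shell thickness. The whole shell has volume $\omega_n r^n(1-(1-\epsilon)^n)$. Taking the ratio, the factor $\omega_n$ and the shell factor cancel to leading order, leaving $N\gtrsim (r/\rho)^n$, where $\rho$ is the effective radius of the thickened cap. An alternative route is the classical estimate $|C_k|/|\partial\cB|\le\sin^{n-1}\theta_0$, combined with $\sin^2\theta_0=2d/r-d^2/r^2$. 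Either route yields a bound of the form $\log N\ge\frac n2\log\frac{c\,r}{d}-O(\log n)$ for an absolute constant $c$.

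\textbf{Main obstacle.} The crux is matching the constant exactly, i.e.\ obtaining $c=2$ with no additive loss. A naive covering with balls of radius $\delta$ gives only $(r/\delta)^n=(r/2d)^{n/2}$. The stated constant $2r/d$ corresponds to the effective radius $\rho=\delta/2$, which is the natural packing radius rather than the covering radius. My first attempt will therefore be a sharper analysis of the cap geometry. The cap of height $d$ has base radius $r\sin\theta_0\approx\sqrt{2rd}$, but its \emph{intrinsic} extent relevant to the volume ratio is smaller, and I will try to extract the missing factor $2^{n}$ from this, again arguing via the cone volume. If this fails, I will record that the proposition holds in the form $\log N\ge\frac n2\log\frac{2r}{d}-O(n)$. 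That weaker form still suffices for the linearithmic rate computations in the paper, since there $r/d$ grows polynomially in $n$ and any $O(n)$ correction is negligible against $n\log n$.
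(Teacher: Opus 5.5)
Your Steps~1 and~2 are correct, and they are essentially the paper's argument made self-contained. The paper quotes the spherical-code bound $N\ge(\sin\frac{\theta}{2})^{-n[1+o(1)]}$ for a maximal code at the angle $\theta_0$ with $\cos\theta_0=1-d/r$, i.e.\ $\sin\frac{\theta_0}{2}=\sqrt{d/2r}$. Your covering-by-caps argument is the standard proof of such a bound. Done carefully, it gives $\log N\ge\frac{n-1}{2}\log\frac{r}{2d}-O(1)$. The shell factor in Step~2 does not cancel: with thickness about $\delta/n$ it costs one power of $\delta/r$, hence the exponent $n-1$.

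\textbf{Where your plan fails.} You propose to recover the missing $2^{n}$ by a finer look at the cap, and this cannot succeed. First, the covering estimate is tight up to polynomial factors. For $\theta\le\pi/2$, integrating $\sin^{n-2}\phi$ over $[(1-\frac1n)\theta,\theta]$ gives $\mu(C(\theta))\ge\sin^{n-1}\theta/(e\pi n)$ for the normalised cap measure. So maximality alone never yields more than about $\sin^{-(n-1)}\theta_0\approx(r/2d)^{(n-1)/2}$. Second, and decisively, the inequality with the exact constant is false. The quantity $(2r/d)^{n/2}=\sin^{-n}\frac{\theta_0}{2}$ is the volume count for a \emph{perfect} packing of caps of angular radius $\theta_0/2$. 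Since those caps around the code points are disjoint, the same estimate gives $N\le e\pi n\,(2r/d)^{(n-1)/2}$, which is below the claimed value once $2r/d>(e\pi n)^2$. Concretely, take $n=2$, $r=100$, $d=1$: the points on the circle need angular separation $\arccos(0.99)\approx0.1415$, so $N\le 44$, while the proposition demands $N\ge 200$. In high dimension, for fixed small $\theta_0$, the target also exceeds the Kabatiansky--Levenshtein bound exponentially, by a factor close to $(4/e)^{n}$.

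\textbf{What to do instead.} Drop the sharpening and make your fallback the statement: for a maximal arrangement, $\log N\ge\frac n2\log\frac{2r}{d}-O(n)$, in the regime $d/r\to0$, $\log\frac{r}{d}=O(n)$ used throughout the paper. This is also exactly what the paper's proof establishes. The $[1+o(1)]$ in the quoted exponent hides precisely your $\Theta(n)$ loss; since $\sin\theta_0\approx 2\sin\frac{\theta_0}{2}$, the quoted bound holds only because $\theta_0\to0$. The paper's last step then silently drops that term. Moreover, Lemma~\ref{lemma:angle} bounds $\sin\frac{\theta}{2}$ from \emph{below}, which is the wrong direction for that step. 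What is really needed is your Step~1 observation that angular separation at least $\theta_0$ forces $d$-angle-density.

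The $O(n)$ loss is harmless for the linearithmic-rate results (Theorems~\ref{thm:single_layer}, \ref{thm:double-layer}, \ref{thm:main}, \ref{thm:uni_code}), whose error terms are already $O(n\log\log n)$. It would, however, change the explicit second-order constant in Theorem~\ref{thm:RR}.
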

\begin{proof}
It is well known (see, e.g., \cite{packing_book}) that the maximal number $N$ of points on the surface of an $n$-dimensional sphere with minimum angular separation $\theta$ satisfies
\begin{equation}
N \geq \left(\sin\frac{\theta}{2}\right)^{-n[1+o(1)]}.
\end{equation}
Taking the logarithm of this bound and combining it with Lemma~\ref{lemma:angle} immediately yields:
\begin{equation}
\begin{split}
\log N &\geq n[1+o(1)]\log\left(\frac{1}{\sin(\theta/2)}\right)\\
&\geq \frac{n}{2}\log\left(\frac{2r}{d}\right),
\end{split}
\end{equation}
completing the proof.
\end{proof}

\subsection{Single-layer construction}\label{ssec:single-layer}
This section presents a simple single-layer angle-dense construction that reproduces the rate $\frac{3}{8}$ for the linearithmic capacity using a significantly simpler design. This rate was recently achieved in \cite{galaxy-codes} through intricate multi-layer galaxy constructions placed inside the input packing balls required for typicality decoding in the original scheme \cite{DI-fading,RRGauss-arXiv}. Here we show that a single-layer angle-dense arrangement inside each ball already suffices to achieve the rate $\dot R=\frac{3}{8}$.

This section also aims to warm up the reader for the subsequent results, which require more involved constructions but rely on the same core ideas. We begin with a preliminary calculation of the rate obtained from the simplest possible single-layer projection-based code, without any use of typicality ideas.

\begin{theorem}\label{thm:single_layer}
Let $\lambda:=2\Phi(-\log n)$. A good $(n,N,\lambda,\lambda)$ DI code can be constructed, with the words forming an angle-dense arrangement, achieving a linearithmic rate $\dot{R}=\frac14$.
\end{theorem}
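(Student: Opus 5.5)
We place all code words on the sphere of radius $r=\sqrt{nP}$, the boundary of the admissible ball $\cB_o(\sqrt{nP},n)$, as a $d$-angle-dense arrangement $\{u_j\}$. Decoding is projective: the receiver testing word $u_k$ looks only at the scalar component of the output along the direction $\vec u_k$. Concretely, we set
\[
\cD_k:=\Bigl\{y^n:\ \bigl\|\Pi_{\vec u_k}\vec y-\vec u_k\bigr\|\le \tfrac d2\Bigr\}.
\]
We then choose $d$ so that both error types are controlled by Proposition~\ref{prop:projection} with threshold $\log n$. Finally, we read off the rate from Proposition~\ref{prop:size_bound}.

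\textbf{Missed identification.} If $u_k$ is sent, then $\vec y=\vec u_k+\sigma\vec Z$. Since projection is linear and $\Pi_{\vec u_k}\vec u_k=\vec u_k$, we get $\Pi_{\vec u_k}\vec y-\vec u_k=\sigma\Pi_{\vec u_k}\vec Z$. Hence
\[
\cG_{u_k}(\cD_k^c)=\Pr\bigl(\|\Pi_{\vec u_k}\vec Z\|> d/(2\sigma)\bigr)=2\Phi\bigl(-d/(2\sigma)\bigr).
\]
Choosing $d=2\sigma\log n$ makes this exactly $\lambda=2\Phi(-\log n)$, which tends to $0$.

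\textbf{False identification.} Suppose $u_j$ is sent and $u_k$ is tested, with $j\neq k$. Now $\Pi_{\vec u_k}\vec y-\vec u_k=(\Pi_{\vec u_k}\vec u_j-\vec u_k)+\sigma\Pi_{\vec u_k}\vec Z$. Both terms lie on the line spanned by $\vec u_k$, and by $d$-angle-density the first has norm at least $d$. By the reverse triangle inequality, the output can land in $\cD_k$ only if $\sigma\|\Pi_{\vec u_k}\vec Z\|\ge d-d/2=d/2$. Thus $\cG_{u_j}(\cD_k)\le 2\Phi(-d/(2\sigma))=\lambda$. The key point to verify is that $\Pi_{\vec u_k}\vec Z$ is a one-dimensional standard Gaussian, which holds by rotation invariance. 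Proposition~\ref{prop:projection} then applies, and the tail bound $e^{-(\log n)^2/2}/\log n$ shows that the code is good.

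\textbf{Rate.} Proposition~\ref{prop:size_bound} with $r=\sqrt{nP}$ and $d=2\sigma\log n$ gives
\[
\log N\ge \frac n2\log\frac{\sqrt{nP}}{\sigma\log n}=\frac n4\log n-\frac n2\log\log n+O(n).
\]
Dividing by $n\log n$, the rate tends to $\frac14$.

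The main obstacle is mild bookkeeping. The $o(1)$ in the exponent of the packing bound behind Proposition~\ref{prop:size_bound}, together with the $\log\log n$ term, only costs an arbitrarily small $\epsilon$, so $\dot R=\frac14$ is achievable in the sense of the definition.
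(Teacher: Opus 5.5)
Your proposal is correct and follows the paper's proof essentially step for step. It uses the same $d$-angle-dense arrangement on the sphere of radius $\sqrt{nP}$ with $d=2\sigma\log n$, the same projective decoder of width $\sigma\log n$, the same reverse-triangle-inequality bound for false identification, and the same rate count via Proposition~\ref{prop:size_bound}; your remark that the $[1+o(1)]$ factor in the packing bound costs only $o(n\log n)$ is a small bit of care the paper glosses over.
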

\begin{proof}
Let the code words be a $d$-angle-dense arrangement of points $\{u_i\}_{i=1}^N$ on the surface of the input space, a ball of radius $r=\sqrt{Pn}$ with $P>0$, and choose $d=2\sigma\log n$ to be the minimum projected distance. Then, by Proposition \ref{prop:size_bound}, we have 
\begin{equation}\label{eq:single-layer-size}
\log N\geq\frac{n}{2}\log\left(\frac{\sqrt{Pn}}{\sigma\log n}\right).
\end{equation}
For each $u_i$, the following set is chosen as identification effect:
\begin{equation}\label{eq:dec_single_layer}
\cD_i=\{y\in\RR^n:\|\Pi_{\vec{u}_i}\vec{y}-\vec{u}_i\|\leq \sigma\log n\}.
\end{equation}

The family $\{(u_i,\cD_i):i\in[N]\}$ will be a good DI code if the error probabilities introduced in Definition \ref{def:DI_code} are bounded. The probability of correctly identifying the sent word $u_i$ is given by
\begin{equation}\label{eq:Proj_error1}
\begin{split}
\Pr&(\cD_i|\vec{u}_i)=\Pr\left(\|\Pi_{\vec{u}_i}\vec{Y}-\vec{u}_i\|\leq\sigma\log n|\vec{u}_i\right)\\
&=\Pr\left(\|\Pi_{\vec{u}_i}(\vec{u}_i+\sigma\vec{Z})-\vec{u}_i\|\leq\sigma\log n\right)\\
&=\Pr\left(\|\Pi_{\vec{u}_i}(\vec{u}_i)-\vec{u}_i+\sigma \Pi_{\vec{u}_i}(\vec{Z})\|\leq\sigma\log n\right)\\
&\stackrel{(a)}{=}\Pr\left(\|\Pi_{\vec{u}_i}(\vec{Z})\|\leq\log n\right)\\
&\stackrel{(b)}{=}1-2\Phi(-\log n)=:1-\lambda,
\end{split}
\end{equation}
where in $(a)$ the fact $\Pi_{\vec{u}_i}\vec{u}_i=\vec{u}_i$ is used and $\sigma$ simplified on both sides of the inequality, and $(b)$ follows from Proposition~\ref{prop:projection}. Hence the probability of missed identification can be bounded by $P_{e,1}(\vec{u}_i)\leq \lambda$.

The probability of falsely identifying a word $\vec u_i$ when a different $u_j$ is sent can be similarly bounded for any pair of words $\vec{u}_i\neq\vec{u}_j$ as follows:
\begin{equation}\label{eq:Proj_error2}
\begin{split}
\Pr&(\cD_i|\vec{u}_j)=\Pr\left(\|\Pi_{\vec{u}_i}\vec{Y}-\vec{u}_i\|\leq\sigma\log n|\vec{u}_j\right)\\
&=\Pr\left(\|\Pi_{\vec{u}_i}(\vec{u}_j+\sigma\vec{Z})-\vec{u}_i\|\leq\sigma\log n\right)\\
&\stackrel{(c)}{\leq}\Pr\left(\|\Pi_{\vec{u}_i}\vec{u}_j-\vec{u}_i\|-\sigma\|\Pi_{\vec{u}_i}\vec{Z}\|\leq\sigma\log n\right)\\
&\stackrel{(d)}{\leq}\Pr\left(\|\Pi_{\vec{u}_i}\vec{Z}\|\geq\log n\right)\\
&=2\Phi(-\log n)=:\lambda,
\end{split}
\end{equation}
where the reverse triangle inequality has been applied in $(c)$, and $(d)$ follows from $\|\Pi_{\vec{u}_i}\vec{u}_j-\vec{u}_i\|\geq d:=2\sigma\log n$. Hence, the probability of false identification can also be bounded for any pair $\vec{u}_i\neq\vec{u}_j$ by 
\(P_{e,2}(\vec{u}_j|\vec{u}_i)\leq\lambda:=2\Phi(-\log n)\).
Now, using Proposition \ref{prop:projection} we observe
\begin{equation}
\lambda:=2\Phi(-\log n)\leq \sqrt{\frac{2}{\pi}}\frac{1}{\log n}e^{-(\log n)^2/2}.
\end{equation}
Therefore, both errors vanish as $n\to\infty$, and the construction defines a good DI code of size $N$. 

It is only left to calculate the linearithmic rate. Using Equation~\eqref{eq:single-layer-size}:
\begin{equation}\label{eq:R_example1}
\dot{R}(n):=\frac{\log N}{n\log n}\geq\frac{1}{4}-\frac12\frac{\log\log n}{\log n}.
\end{equation}
Notice that, as the second term vanishes as $n\to\infty$, for any arbitrarily small constant $\epsilon>0$ there must exist an $n_0$ such that we have $\dot R(n)\geq \frac14-\epsilon$ for all $n\ge n_0$. Therefore, $\dot{R}=\frac14$ is an achievable rate, and the proof is complete.
\end{proof}

While the code construction above does not achieve the performance of the galaxy construction in \cite{galaxy-codes} (a rate of $\frac38$), it already reproduces the linearithmic performance of the best known typicality-based codes from \cite{DI-fading,RRGauss-arXiv} with an remarkably simple and purely geometric construction which avoids any typicality arguments. 

\medskip
We show next that combining this single-layer projection-based code with a good typicality-based code recovers the rate $\dot R=\frac38$. The idea is to first apply a typicality decoder to identify a particular ball, and then use a projective decoder to identify a sequence within an arrangement inside that ball, mimicking the strategy of \cite{galaxy-codes} but with a substantially simpler construction (a single layer inside each ball rather than multiple layers).

Let $(n,N^{(\cT)},\lambda_1^{(\cT)},\lambda_2^{(\cT)})$ be a good typicality-based code as constructed in \cite{DI-fading}. Such a code can be obtained, for some constant $\alpha>0$ and arbitrarily small $b>0$, by packing balls $\cB_{i}(\alpha n^{\frac{1-b}{4}},n)$ in the input space ($i\in[N^{(\cT)}]$), and satisfies 
\begin{equation}\label{eq:typical_size}
    \log N^{(\cT)}\geq\frac{n}{4}(1-b)\log n-O(n),
\end{equation}
which shows that the rate $\dot R=\frac{1}{4}$ is achievable.

\begin{theorem}\label{thm:reproducing_3on8}
Given $\lambda=2\Phi(-\log n)$ and a good typicality-based $(n,N^{(\cT)},\lambda_1^{(\cT)},\lambda_2^{(\cT)})$ DI code (as described above), we can create another $(n,N,\lambda+\lambda_1^{(\cT)},\lambda+\lambda_2^{(\cT)})$ DI
code combining it with an angle-dense single-layer construction inside each packing ball, achieving a linearithmic rate $\dot{R}=\frac38$.
\end{theorem}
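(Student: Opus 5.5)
The plan is to nest the single-layer construction of Theorem~\ref{thm:single_layer} inside each packing ball of the typicality-based code. For every $i\in[N^{(\cT)}]$, let $c_i$ be the centre of the ball $\cB_{i}(\rho,n)$ with $\rho=\alpha n^{\frac{1-b}{4}}$. Inside it I place a $d$-angle-dense arrangement $\{u_{i,k}\}_{k=1}^{K}$ on the sphere of radius $\rho$ centred at $c_i$, with $d=2\sigma\log n$. All geometry is taken relative to $c_i$: the projection direction is $\vec{u}_{i,k}-\vec{c}_i$. If necessary I shrink the power slightly, replacing $P$ by $P-\epsilon$ in the outer packing, so that the inner spheres stay inside the admissible ball; this only costs $O(n)$ in $\log N^{(\cT)}$.

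By Proposition~\ref{prop:size_bound}, applied after translating $c_i$ to the origin,
\[
\log K\geq \frac n2\log\frac{\rho}{\sigma\log n}=\frac n2\Big(\frac{1-b}{4}\log n-\log\log n+O(1)\Big).
\]
The total code size is $N=N^{(\cT)}K$. Adding \eqref{eq:typical_size} gives
\[
\log N\geq \Big(\frac14+\frac18\Big)(1-b)\,n\log n-O(n\log\log n).
\]
Since $b$ is arbitrary, letting $b\to0$ yields $\dot R=\frac38$, exactly as in the rate computation \eqref{eq:R_example1}.

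The decoder for $u_{i,k}$ is the intersection $\cD_{i,k}=\cD_i^{(\cT)}\cap\{y:\|\Pi_{\vec u_{i,k}-\vec c_i}(\vec y-\vec c_i)-(\vec u_{i,k}-\vec c_i)\|\le\sigma\log n\}$. Here $\cD^{(\cT)}_i$ is the typicality decoder of ball $i$ from \cite{DI-fading}. That decoder accepts outputs generated by any input in $\cB_i$ with probability at least $1-\lambda_1^{(\cT)}$, and accepts outputs of inputs from other balls with probability at most $\lambda_2^{(\cT)}$.

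The errors then split as follows.
\begin{itemize}
\item Missed identification: a union bound over the two failure events gives at most $\lambda_1^{(\cT)}+\lambda$. The projective part is computed exactly as in \eqref{eq:Proj_error1}, after shifting by $c_i$, because $\vec y-\vec c_i=(\vec u_{i,k}-\vec c_i)+\sigma\vec Z$.
\item False identification with the sent word $u_{j,l}$ in a different ball $j\neq i$: the acceptance probability is bounded by $\Pr(\cD_i^{(\cT)}\mid u_{j,l})\le\lambda_2^{(\cT)}$.
\item False identification within the same ball ($j=i$, $l\neq k$): the argument of \eqref{eq:Proj_error2}, applied to the centred vectors using the angle-dense property, gives at most $\lambda$.
\end{itemize}
Both false-identification bounds are therefore at most $\lambda+\lambda_2^{(\cT)}$, and all errors vanish.

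The main obstacle is the claim that the typicality decoder of \cite{DI-fading} keeps its error guarantees uniformly for every input inside its packing ball, not only for the centre. I would handle this by appealing to the construction in \cite{DI-fading}, where balls of radius $\sim n^{1/4}$ are precisely the robustness scale of the typicality test. If needed, I would use the centre's test with a threshold enlarged by $O(\rho^2)=O(n^{(1-b)/2})$, which remains $o(\sqrt n)$ and so still vanishes relative to the standard deviation of $\|\sigma Z\|^2$ only if $b>0$. This is why an arbitrarily small $b>0$ is kept in the construction.
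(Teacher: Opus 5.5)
Your counting and the projective layer agree with the paper's proof. That covers the union bound for missed identification and the angle-dense argument relative to $\vec c_i$ for two words in the same ball. The gap is in false identification across balls.

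You put the arrangement on the sphere of radius $\rho$, i.e.\ on the boundary of the packing ball $\cB_i(\rho,n)$. A packing only guarantees $\|\vec c_i-\vec c_j\|\ge 2\rho$, so neighbouring balls may touch or nearly touch. Words $u_{i,k}$ and $u_{j,l}$ from different arrangements can then be arbitrarily close, and may even coincide at a point of tangency. For such a pair no decoder keeps $\Pr(\cD_{i,k}\mid u_{j,l})$ small.

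In particular, the bound $\Pr(\cD_i^{(\cT)}\mid u_{j,l})\le\lambda_2^{(\cT)}$ is not provided by the code of \cite{DI-fading}. That code only rejects the other code words $c_j$, which are at distance at least $2\rho$ from $c_i$. A boundary word of a neighbouring ball lies at distance about $\rho$ from $c_i$, just like your own words.

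Your repair treats only the acceptance side and makes the rejection side worse. Enlarging the centre's threshold by $\rho^2$ is exactly what makes the test accept every input at distance $\rho$ from $c_i$, neighbours included. Your argument that the enlargement is $o(\sqrt n)$ and hence harmless also proves too much. A squared-norm test around $\vec c_i$ cannot distinguish $c_i$ from inputs at squared distance $o(\sqrt n)$, since that is below the fluctuations of $\sigma^2\|\vec Z\|^2$. At that scale the outer typicality code could not separate its own centres either.

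The missing idea is a margin between the arrangements, which is what the paper's ``doubling the radii of the decoding balls'' supplies. Pack balls of radius $2\rho$, but place the angle-dense arrangement on the sphere of radius $\rho$ about each centre. Any two words from different arrangements are then at distance at least $2\rho$, the separation of the original typicality code.

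You can then run the standard typicality test around the tested word $u_{i,k}$ itself, as in the paper. Its missed-identification bound is the usual single-word bound. Its false-identification bound depends only on the distance between the tested and the sent word (by rotational invariance of the noise), so it applies to every word of another arrangement.

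The uniformity over a whole ball that you identified as the main obstacle is then never needed. Doubling the packing radius costs only $O(n)$ in $\log N^{(\cT)}$, so the rate $\frac38$ is unaffected.
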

\begin{proof}
Place a single-layer angle-dense construction of radius $r^{(\cT)}=\alpha n^\frac{1-b}{4}$ inside each packing ball $\cB_i$. It has $N_i$ elements at a minimum projective distance $d=2\sigma\log n$ which can be identified with the decoder $\cD_i$ in Equation~\eqref{eq:dec_single_layer}, with errors bounded by $\lambda=2\Phi(-\log n)$. Indeed, notice that, maintaining the choice of $d$, the change on the radius does not affect the error bounds in Equations \eqref{eq:Proj_error1} and \eqref{eq:Proj_error2}. 

All the words from the same single-layer arrangement (inside the same ball of radius $r^{(\cT)}$) need to be simultaneously identifiable by a typical decoder. This can be ensured by doubling the radii of the decoding balls $\cB_{i}(2\alpha n^\frac{1-b}{4},n)$ because, then, no word from a different angle-dense system can be at a distance smaller than $r^{(\cT)}$ from the tested word. Notice also that doubling the radii does not affect the scaling of $N^{(\cT)}$ at first order, as the packing number only changes by a constant factor which is immediately absorbed by the lower order term $O(n)$ in Equation~\eqref{eq:typical_size}.

Invoking now Proposition \ref{prop:size_bound}, one observes that each single-layer arrangement satisfies:
\begin{equation}
\log N_i\geq\frac n2\log\left(\frac{\alpha}{\sigma}\frac{n^{\frac{1-b}{4}}}{\log n}\right).
\end{equation}

The construction has $N_i$ words inside each of the $N^{(\cT)}$ balls, so the total number of elements $N=N_i\cdot N^{(\cT)}$ and its logarithm is bounded by: 
\begin{equation}
\begin{split}
\!\!\log N&= \log N_i +\log N^{(\cT)}\\
&\geq\frac{n}{2}\log\left(\frac{\alpha}{\sigma}\frac{n^{\frac{1-b}{4}}}{\log n}\right)\!+\!\frac{n}{4}(1-b)\log n-O(n)\\
&\geq\left(\frac{n}{8}+\frac{n}{4}\right)(1-b)\log n-O(n\log\log n).
\end{split}
\end{equation}
Dividing by $n\log n$ we obtain the following linearithmic rate: 
\begin{equation}
    \dot{R}(n)=\frac38(1-b)-O\left(\frac{\log\log n}{\log n}\right).
\end{equation}
Similarly to the last proof, as one can choose an arbitrarily small $b>0$ and take 
large enough values of $n$ such that $\dot R(n)\geq\frac{3}{8}-\epsilon$ for any $\epsilon>0$, it can be concluded that $\dot R=\frac38$ is an achievable rate.

The error analysis is immediate because the total errors of missed and false identification can be bounded by the sum of the typicality decoder and the projective decoder errors: $\lambda_1=\lambda_1^{(\cT)}+\lambda$ and $\lambda_2=\lambda_2^{(\cT)}+\lambda$, which vanish as $n\to\infty$. The proof is completed.
\end{proof}

We have achieved the same rate performance as in \cite{galaxy-codes} but with a much simpler construction: placing inside the typicality-decoding balls an angle-dense single-layer code instead of a full multi-layer galaxy. Moreover, in the multi-layer galaxy construction the missed identification error accumulates over the $L=O(\log n)$ layers: $\lambda'_1=\lambda_1^{(\cT)}+L\lambda$, whereas in the present single-layer scheme the corresponding bound is smaller: $\lambda_1=\lambda_1^{(\cT)}+\lambda$.

\subsection{Double-layer codes and motivation for multi-layer cases}
The single-layer construction above is angle-dense, meaning that no extra sequences can be added on the surface of the sphere with the necessary projective distance for identification using our projective decoder. However, notice the following two key facts. First, in spite of this density, the distance between elements, bounded in Lemma~\ref{lemma:distance} as 
\(
\|\vec{u}_j-\vec{u}_k\|\geq\sqrt{2rd},
\) 
is still large. Indeed, in the setting of Theorem~\ref{thm:single_layer}, where $r=\sqrt{nP}$ and $d=2\sigma\log n$, we have 
\begin{equation}\label{eq:intuition_ball_size}
    \|\vec{u}_j-\vec{u}_k\|\geq\sqrt{4\sigma P\sqrt{n}\log n}=O\left(n^{\frac14}\sqrt{\log n}\right).
\end{equation}
Actually, it is slightly (logarithmically) larger than the distance between elements needed for typicality decoding,  where we required a packing of balls with radius $r^{(\cT)}=O(n^\frac{1-b}{4})$.
\footnote{The fact that we obtain the same rate using only the typicality construction or only the single-layer projective decoding construction in Theorem~\ref{thm:single_layer} is clearer now: both constructions result from a packing of similar size. In the projective case, the packing is only done on the surface of the input space, while in the typicality case the packing is done in the whole input. But the number of packing elements is not affected at first order because the volume of a sphere concentrates on its surface, leading to the same achievable rate.}

Second, while the size $r^{(\cT)}$ of the packing balls is a necessary requirement of the typicality construction such that the errors can vanish, we can have good projection-based codes with words much closer together. Indeed, in Theorem~\ref{thm:reproducing_3on8}, we have constructed a good projective code inside each typicality packing ball $\cB_i(2\alpha n^\frac{1-b}{4},n)$. Using again Lemma~\ref{lemma:distance} we observe that the pairwise distance between the words in each of those projective codes is
\begin{equation}\label{eq:intuition2}
    \|\vec{u}_j^{(i)}-\vec{u}_k^{(i)}\|\geq\sqrt{8\sigma \alpha n^\frac{1-b}{4}\log n}=O\left(n^\frac{1-b}{8}\sqrt{\log n}\right),
\end{equation}
much smaller than the distance between the elements of the previously analysed codes, cf.~Equation~\eqref{eq:intuition_ball_size}. Nevertheless, we observed that these words defined good codes and had a positive impact of $\frac18$ on the achievable rate. 

Then, if the distance between words in the single-layer construction is similar to the one in the typicality-based code, and another projection-based code allows to distinguish arrangements of words that are much closer together, 
%Given the similarity between the single-layer projection-based code on the surface of the input space and the typicality-based code in previous approaches, exposed in the first remark above; and the existence of a smaller projective code with positive contribution to the rate, described in the second remark; the following question arises naturally: 
is it possible to construct a double-layer projective code that achieves the linearithmic rate $\frac38$ avoiding the use of typicality decoding? If it is, can we keep adding smaller layers successively with positive contributions to the achievable rate, bringing us closer to the known upper bound $\dot R\leq\frac12$?

The main difficulty is ensuring that the two codes can be simultaneously good. In Theorem~\ref{thm:reproducing_3on8}, the typicality decoder guaranteed that no element in a particular projective code (inside a typicality ball) could be confused with any element of another arrangement (in a different ball), so we could automatically apply both decoders successively. We have to be a bit more careful now.
A naive construction of a double-layer code would be to place around each of the elements $\{\vec o_{s_1}\}_{s_1=1}^{N_1}$ of a $d$-angle-dense arrangement in a ball of radius $r$ another dense system $\{\vec o_{s_1,s_2}\}_{s_2=1}^{N_2}$ in a smaller ball of radius $\sqrt{r}$ (the radii hierarchy extracted from the observations above, cf.~Equations~\ref{eq:intuition_ball_size} and \eqref{eq:intuition2}). The second layer elements $\{\vec o_{s_1,s_2}\}_{s_1,s_2=1}^{N_1,N_2}$ would be the code words, see Figure~\ref{fig:no_subspaces}. 

Let the word $\vec{u}_i=\vec o_{s_1,s_2}$ be sent. Then, for the decoding, one would use a first layer projective decoder $\cD_i^{(1)}$ to identify the first layer element $\vec{o}_{s_1}$ around which the word is placed, and a second layer decoder  $\cD_i^{(2)}$ to identify the word. Let
\begin{equation}\label{eq:double-layers}
\begin{split}
\cD_i^{(1)}&=\{y\in\RR^n:\|\Pi_{s_0\to s_1}\vec{y}-\vec{o}_{s_1}\|\leq \sigma\log n\},\\
\cD_i^{(2)}&=\{y\in\RR^n:\|\Pi_{s_1\to s_2}\vec{y}-\vec{o}_{s_1,s_2}\|\leq \sigma\log n\};
\end{split}
\end{equation}
where $\Pi_{s_0\to s_1}$ is the orthogonal projection onto the vector $\vec v_{s_0\to s_1}=\vec o_{s_1}-\vec o_{s_0}$, with $\vec o_{s_0}=\vec o$ the origin, and similarly $\Pi_{s_1\to s_2}$ is the projection onto $\vec v_{s_1\to s_2}=\vec o_{s_1,s_2}-\vec o_{s_1}$. We call these vectors onto which we project \emph{path vectors}. 

\begin{figure}
    \centering
    \includegraphics[width=0.95\linewidth]{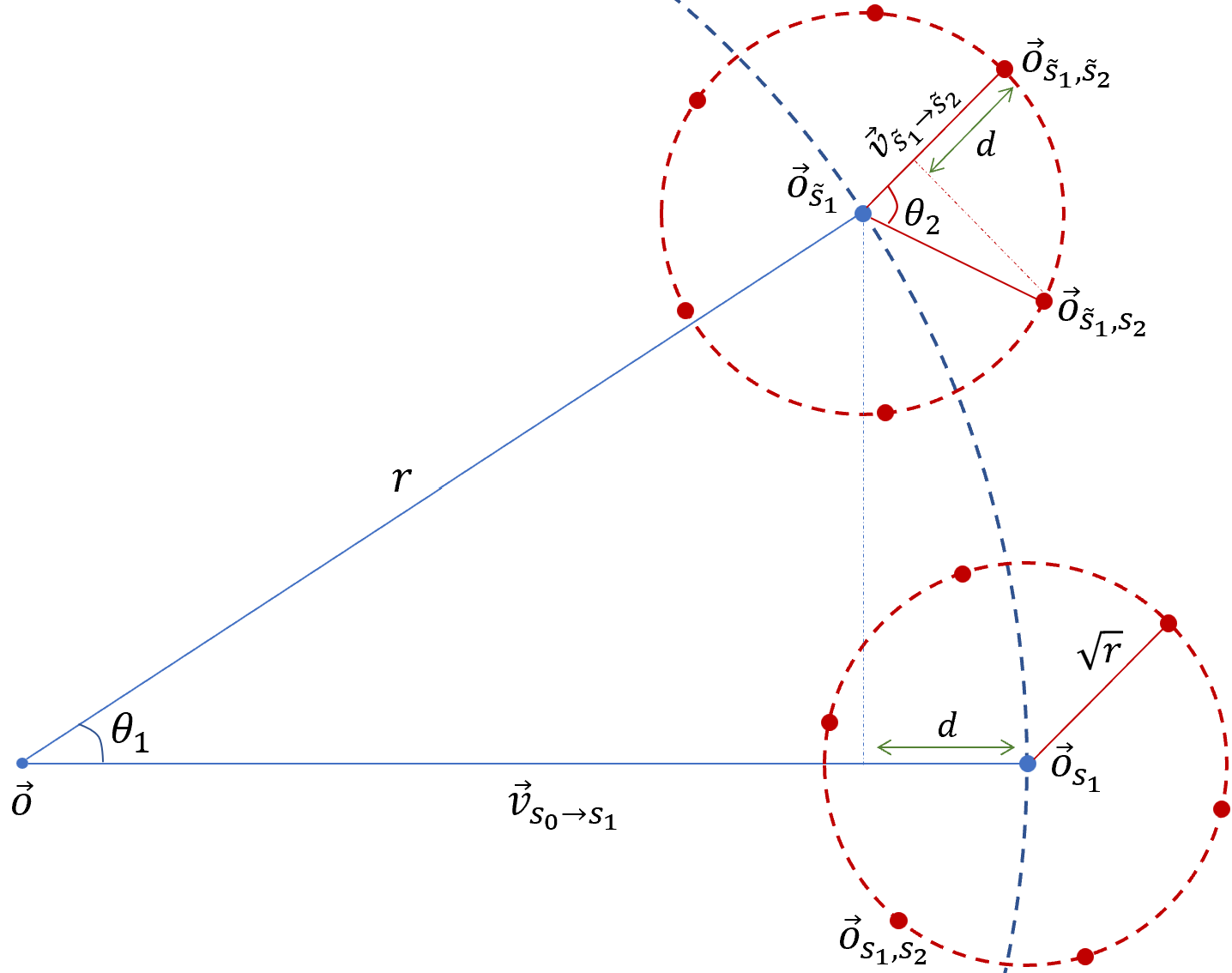}
    \caption{A 2D representation of two neighbouring first layer elements $\vec{o}_{s_1}$ and $\vec{o}_{\tilde{s}_1}$ in a naive code with two layers. As both first (blue) and second layer (red) arrangements are $d$-angle-dense, we observe how the orthogonal projection of $\vec{o}_{\tilde{s}_1}$ onto $\vec{v}_{s_0\to s_1}$ is at a distance $d$ from $\vec{o}_{s_1}$, and similarly the projection of $\vec{o}_{\tilde{s}_1,s_2}$ onto $\vec{v}_{\tilde{s}_1\to \tilde s_2}$ is also at a distance $d$ from $\vec{o}_{\tilde{s}_1, \tilde s_2}$.}
    \label{fig:no_subspaces}
\end{figure}

However, looking at Figure~\ref{fig:no_subspaces}, notice that if one tries to identify a particular second layer system (say, the one around $\vec o_{s_1}$) the projection of code words onto the path vector $\vec{v}_{s_0\to s_1}$ can be very far away from $\vec o_{s_1}$. In fact, as far away as the size of the second layer radius, which is much larger [$O(\text{poly}\,n)$] than the acceptance range of the decoder [$O(\log n)$]. Also, some code words from a different second layer system (take $\vec o_{\tilde s_1,\tilde s_2}$ in the figure, for example) have a projection onto $\vec{v}_{s_0\to s_1}$ very close to $\vec o_{s_1}$. Then, the proposed first layer decoder cannot identify correctly all the elements in the particular second layer arrangement, and can also wrongly accept elements from different second layer systems. 

%One could try to fix these problems by increasing the acceptance range of the decoders to something larger than the second layer radius. But then, remembering the proof of Theorem~\ref{thm:single_layer}, one would be required to similarly increase the value of the minimum projected distance $d$ that defines the geometry of the code in order to correctly bound the second type of error. However, this massive increase on $d$ (from logarithmic to polynomial in $n$) would critically affect the code size. Indeed, by looking at Proposition~\ref{prop:size_bound}, one observes that forcing $d=O(r)$ implies $\log N=O(n)$, so the linearithmic scaling of the message length is lost.

%Another fix could be changing the radius hierarchy and try to reuse the ideas above. However, a multi-layer optimisation of this approach yields exactly the construction in \cite{galaxy-codes}, which cannot produce further improvements on the rate, and is no better than the single-layer angle-dense approach proposed in Theorem~\ref{thm:reproducing_3on8}.

%\medskip

There is, however, a trick to bypass the projective issues of multiple angle-dense layers, which consists in constructing the second layer arrangements in particular subspaces. Specifically, around each first layer element $s_1$, the second layer arrangement is a $d$-angle-dense construction in the ball $\cB_{s_1}(\sqrt{r},n-1)\in \vec{v}_{s_{0}\to s_{1}}^\perp$, the subspace orthogonal to $\vec{v}_{s_{0}\to s_{1}}$:
\begin{equation}
\vec{v}_{s_{0}\to s_{1}}^\perp:=\{\vec{x}\in\RR^n:\vec{x}\cdot\vec{v}_{s_{0}\to s_{1}}=0\}.
\end{equation}
With this projection we lose a dimension in all second layer arrangements, but we gain the following fundamental property (see Figure~\ref{fig:subspaces} for visual intuition):

\begin{figure}
\centering
\includegraphics[width=0.95\linewidth]{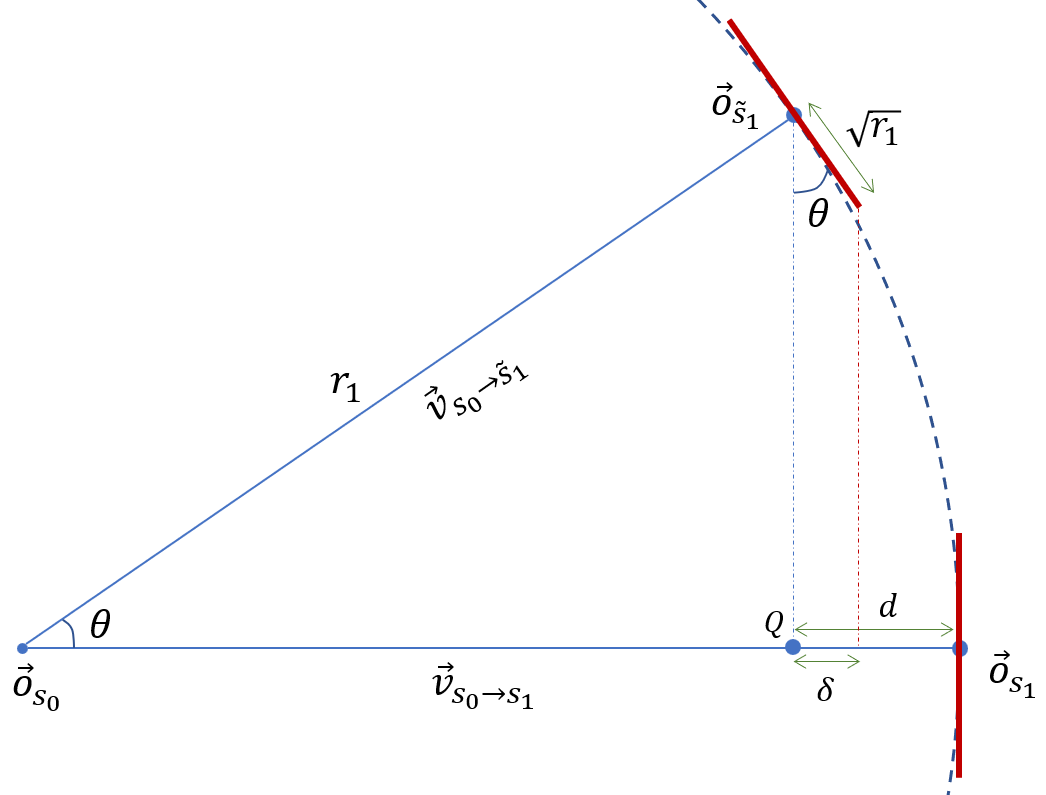}
\caption{Visual representation of the plane containing the origin $\vec{o}=\vec{o}_{s_0}$, and two first layer elements $\vec{o}_{\tilde s_1}$, and $\vec{o}_{s_1}$. The second layer systems in this plane are the red lines (orthogonal to the corresponding path vectors); thus, the projection of any second layer element around $\vec{o}_{\tilde s_1}$ onto the path vector $\vec v_{s_0\to s_1}$ is at most at a distance $\delta$ from $Q:=\Pi_{s_0\to s_1}\vec o_{\tilde s_1}$. Also, the projection of any second layer element around $\vec{o}_{s_1}$ onto $\vec v_{s_0\to s_1}$ is $\vec{o}_{s_1}$.}
\label{fig:subspaces}
\end{figure}

\begin{lemma}\label{lemma:2-layer-delta}
In the double-layer construction above, the distance $\delta$ between the projection of the first layer element $\vec o_{\tilde s_1}\in\cB_{s_0}(r,n),$ and the projection of any second layer element $\vec o_{\tilde s_1,\tilde s_2}\in\cB_{s_1}(\sqrt{r},n-1)$ around it onto a certain path vector $\vec v_{s_0\to s_1}$ is bounded by:
\[
\delta:=\|\Pi_{s_0\to s_1}\vec o_{\tilde s_1} - \Pi_{s_0\to s_1}\vec o_{\tilde s_1,\tilde s_2}\|<\sqrt{2d}
\]
\end{lemma}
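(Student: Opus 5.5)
The plan is to reduce $\delta$ to a single trigonometric quantity in the plane spanned by $\vec o_{s_1}$ and $\vec o_{\tilde s_1}$, as in Figure~\ref{fig:subspaces}, and then control it using the $d$-angle-dense geometry of the first layer (Lemma~\ref{lemma:angle}). Write $\vec w:=\vec o_{\tilde s_1,\tilde s_2}-\vec o_{\tilde s_1}$. By construction $\vec w$ lies in $\vec v_{s_0\to\tilde s_1}^\perp$ and $\|\vec w\|=\sqrt r$, since the second layer sits on the surface of $\cB_{\tilde s_1}(\sqrt r,n-1)$. Projection is linear, so
\[
\delta=\|\Pi_{s_0\to s_1}\vec w\|=\frac{|(\vec w,\vec o_{s_1})|}{r}.
\]

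Next I use the orthogonality $(\vec w,\vec o_{\tilde s_1})=0$. Let $\hat a=\vec o_{s_1}/r$ and $\hat b=\vec o_{\tilde s_1}/r$, and let $\theta$ be the angle between them. Then $(\vec w,\hat a)=(\vec w,\hat a-(\hat a,\hat b)\hat b)$. The vector $\hat a-(\hat a,\hat b)\hat b$ is the component of $\hat a$ orthogonal to $\hat b$, and its norm is $\sin\theta$. Cauchy--Schwarz then gives
\[
\delta\le\sqrt r\,\sin\theta ,
\]
with equality when $\vec w$ lies in the plane of Figure~\ref{fig:subspaces}. This is the exact geometric content of the figure: the second-layer "red line" through $\vec o_{\tilde s_1}$ is tilted by $\theta$ relative to the one through $\vec o_{s_1}$.

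It remains to bound $\sin\theta$. Write $\sin^2\theta=(1-\cos\theta)(1+\cos\theta)<2(1-\cos\theta)$. As in the proof of Lemma~\ref{lemma:angle}, $1-\cos\theta=\|\Pi_{s_0\to s_1}\vec o_{\tilde s_1}-\vec o_{s_1}\|/r$. Hence
\[
\delta^2< 2\,\|\Pi_{s_0\to s_1}\vec o_{\tilde s_1}-\vec o_{s_1}\| = 2\,\|Q-\vec o_{s_1}\|,
\]
and $\delta<\sqrt{2d}$ holds precisely when the projected distance $\|Q-\vec o_{s_1}\|$ equals the minimal value $d$. This is the neighbouring configuration drawn in Figures~\ref{fig:no_subspaces} and~\ref{fig:subspaces}, where the angle-dense condition is tight.

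The main obstacle is exactly this last step. Angle-density gives only a lower bound on $\theta$, not an upper bound, so for far-apart first-layer elements $\sqrt r\sin\theta$ can be as large as $\sqrt r$. I would therefore state and use the bound in the relevant regime: the first-layer elements $\tilde s_1$ whose projection $Q$ lies at distance of order $d$ from $\vec o_{s_1}$. These are the only ones that can threaten the first-layer decoder $\cD^{(1)}$, which has acceptance radius $\sigma\log n=d/2$. For the remaining elements I would argue separately that $\|Q-\vec o_{s_1}\|-\delta$ still exceeds the decoding window. The inequality $\|Q-\vec o_{s_1}\|-\sqrt{2\|Q-\vec o_{s_1}\|}$ is increasing for $\|Q-\vec o_{s_1}\|\ge d$ once $d>2$, so those elements are handled automatically. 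In this reading the lemma is the tight neighbouring case of the general bound $\delta<\sqrt{2\|Q-\vec o_{s_1}\|}$.
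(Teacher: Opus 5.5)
\textbf{Your route is the paper's route, made rigorous, and your caveat about far-apart pairs is correct.} The paper simply asserts $\delta\le\sqrt r\sin\theta$ by pointing to Figure~\ref{fig:subspaces}. Your Cauchy--Schwarz step, using $\vec w\perp\vec o_{\tilde s_1}$, is the actual proof of that inequality.

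\textbf{The flaw is in the paper's proof, not yours.} The paper writes ``with $\cos\theta\le 1-\frac dr$'' and then evaluates $\sqrt{r(1-\cos^2\theta)}=\sqrt{2d-d^2/r}$. But for $\theta\le\pi/2$, the inequality $\cos\theta\le 1-\frac dr$ gives a \emph{lower} bound on $\sin\theta$. So that step is valid only at equality $\cos\theta=1-\frac dr$, which is exactly your tight neighbouring case.

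\textbf{The lemma is false for general pairs, not just unprovable by this method.} Your objection concerns the bound $\sqrt r\sin\theta$, but $\delta$ itself can be large. Suppose $\vec o_{\tilde s_1}\perp\vec o_{s_1}$. Then $\vec o_{s_1}/r$ lies in the subspace carrying the second layer around $\tilde s_1$. A maximal $d$-angle-dense arrangement there must contain a point $\vec w$ with $(\vec w,\vec o_{s_1}/r)>\sqrt r-d$, so $\delta>\sqrt r-d\gg\sqrt{2d}$. This is not an edge case, since in high dimension most pairs of first-layer points are nearly orthogonal.

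\textbf{Your general bound is what Theorem~\ref{thm:double-layer} actually needs.} With $D=\|Q-\vec o_{s_1}\|$, you have $\delta^2\le 2D-D^2/r<2D$, which reduces to the paper's $\sqrt{2d-d^2/r}$ at $D=d$. Combined with the monotonicity of $D-\sqrt{2D}$ for $D>\frac12$, this gives $D-\delta>d-\sqrt{2d}$ for every pair $s_1\neq\tilde s_1$. In the paper's \eqref{eq:double-layer-delta-step}, the equality ``$=d-\delta$'' should read $\ge D-\delta\ge d-\sqrt{2d}$, and only your monotonicity step justifies it.

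\textbf{One small correction.} In that theorem $d=3\sigma\log n$, so the acceptance window is $\sigma\log n=d/3$, not $d/2$. The quantity $D-\delta$ must exceed twice the window, $2\sigma\log n=2d/3$, not merely the window. The bound $d-\sqrt{2d}$ achieves this once $d\ge 18$.
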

\begin{proof}
It is just necessary to notice that $\delta\leq\sqrt{r}\sin\theta$ (refer to Figure~\ref{fig:subspaces}). Then, with $\cos\theta\leq1-\frac dr$,
\begin{equation*}
\delta<\sqrt{r}\sin\theta=\sqrt{r(1-\cos\theta^2)}=
\sqrt{2d-\frac{d^2}{r}}<\sqrt{2d}.\qedhere
\end{equation*}
\end{proof}

Therefore, if one chooses $d$ much smaller than the radii, the projections of different second layer systems concentrate tightly around the projection of their corresponding centres, which can be identified with our projective decoders! 

\begin{theorem}\label{thm:double-layer}
Let $\lambda:=2\Phi(-\log n)$. A good $(n,N,2\lambda,\lambda)$ DI code can be constructed with two angle-dense layer arrangements achieving a linearithmic rate $\dot{R}=\frac38$.
\end{theorem}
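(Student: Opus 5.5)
We construct the code in two layers, analyse the two error types, and then count the code words. For the construction, take a first layer $\{\vec o_{s_1}\}_{s_1=1}^{N_1}$ that is $d$-angle-dense on the sphere of radius $r=\sqrt{nP}$. Around each $\vec o_{s_1}$, place a second layer $\{\vec o_{s_1,s_2}\}_{s_2=1}^{N_2}$ that is $d$-angle-dense on the sphere of radius $\sqrt r$ centred at $\vec o_{s_1}$ inside the affine subspace $\vec o_{s_1}+\vec v_{s_0\to s_1}^\perp$. The code words are the $\vec o_{s_1,s_2}$; they satisfy the power constraint up to a factor $1+1/r$, which can be absorbed by shrinking $r$ slightly. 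The decoder is $\cD_i=\cD_i^{(1)}\cap\cD_i^{(2)}$ as in \eqref{eq:double-layers}, with acceptance thresholds $\sigma\log n$ in both layers. The minimum projected distance must dominate the spread $\sqrt{2d}$ from Lemma~\ref{lemma:2-layer-delta}, so I would take $d=c\sigma\log n$ with $c$ large enough that $d-\sqrt{2d}\geq 2\sigma\log n$ for large $n$. For example, $d=3\sigma\log n$ works, since $\sqrt{2d}=O(\sqrt{\log n})$.

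\textbf{Missed identification.} Since $\vec v_{s_1\to s_2}\perp\vec v_{s_0\to s_1}$, we have $\Pi_{s_0\to s_1}\vec o_{s_1,s_2}=\vec o_{s_1}$. Likewise, projecting onto $\vec v_{s_1\to s_2}$ and then translating by $\vec o_{s_1}$ returns $\vec o_{s_1,s_2}$. I would interpret $\cD^{(2)}$ as projecting $\vec y-\vec o_{s_1}$, i.e.\ an affine projection. Each test then fails only if $\|\Pi\vec Z\|>\log n$, exactly as in \eqref{eq:Proj_error1}. A union bound gives a total error of $2\lambda$.

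\textbf{False identification.} Suppose $\vec o_{\tilde s_1,\tilde s_2}$ is sent and $\vec o_{s_1,s_2}$ is tested. There are two cases.
\begin{itemize}
\item If $\tilde s_1\neq s_1$, use the first-layer test. Its projected signal lies within $\delta<\sqrt{2d}$ of $\Pi_{s_0\to s_1}\vec o_{\tilde s_1}$ by Lemma~\ref{lemma:2-layer-delta}, and that point lies at distance at least $d$ from $\vec o_{s_1}$ by angle-density. So the offset is at least $d-\sqrt{2d}\geq2\sigma\log n$, and the reverse-triangle argument of \eqref{eq:Proj_error2} bounds the acceptance probability by $\lambda$.
\item If $\tilde s_1=s_1$, both words lie in the same second-layer system. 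The second-layer test is then exactly the single-layer situation of Theorem~\ref{thm:single_layer} within that subspace, with separation $d\ge2\sigma\log n$, so again the error is at most $\lambda$.
\end{itemize}
Since the decoder is an intersection, bounding one factor suffices, giving $\lambda_2\le\lambda$.

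\textbf{Rate.} Proposition~\ref{prop:size_bound} gives $\log N_1\geq\frac n2\log\frac{2\sqrt{nP}}{d}=\frac n4\log n-O(n\log\log n)$. Applied in dimension $n-1$ with radius $\sqrt r=(nP)^{1/4}$, it gives $\log N_2\geq\frac{n-1}{2}\log\frac{2(nP)^{1/4}}{d}=\frac n8\log n-O(n\log\log n)$. Therefore $\log N=\log N_1+\log N_2\geq\frac38 n\log n-O(n\log\log n)$, and $\dot R=\frac38$ is achievable, as in the earlier proofs.

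The main obstacle is the cross-system false identification in the first case. It requires that Lemma~\ref{lemma:2-layer-delta} applies to the angle $\theta$ between $\vec o_{s_1}$ and $\vec o_{\tilde s_1}$ in the right orientation. It also requires checking that the point $Q$ sits on the far side from $\vec o_{s_1}$ at distance at least $d$, i.e.\ that the projected separation is $r(1-\cos\theta)\geq d$ by Definition~\ref{def:angle-dense}, while the second-layer spread stays $O(\sqrt{\log n})$. This is what forces the mild enlargement of $d$ beyond $2\sigma\log n$.
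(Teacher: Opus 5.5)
Your proposal follows the paper's proof essentially step for step: radii $r$ and $\sqrt r$ with the second layer in $\vec v_{s_0\to s_1}^\perp$, $d=3\sigma\log n$, a union bound giving $2\lambda$, the two-case false-identification argument via Lemma~\ref{lemma:2-layer-delta}, and the $\frac14+\frac18$ count; your affine reading of $\Pi_{s_1\to s_2}$ is also exactly what the paper's identity $\Pi_{s_1\to s_2}\vec o_{s_1,s_2}=\vec o_{s_1,s_2}$ tacitly requires. On the obstacle you flag: $\delta<\sqrt{2d}$ by itself fails for widely separated centres (at $\theta=\pi/2$ the spread can be $\sqrt r$), but with $t=1-\cos\theta$ the first-layer offset is at least $rt-\sqrt r\sin\theta\ge rt-\sqrt{2rt}$, which is increasing in $rt\ge d$, so $d-\sqrt{2d}\ge 2\sigma\log n$ holds for every pair with $\tilde s_1\neq s_1$.
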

\begin{proof}
Choose $d=3\sigma\log n$ (this slight increase in $d$ will compensate $\delta$). Construct a $d$-angle-dense arrangement on the surface of $\cB_{s_0}(r_1,n)$ with $r_1=\sqrt{nc}$, $c<P$ and elements $s_1\in[N_1]$. Around each $s_1$, construct another $d$-angle-dense arrangement on the surface of $\cB_{s_1}(r_2,n-1)\subset\vec v_{s_0\to s_1}^\perp$, with $r_2=\sqrt{r_1}=\sqrt[4]{nc}$ and elements $s_2\in[N_2]$. The set of all second layer elements $\{\vec o_{s_1,s_2}\}_{s_1=1,s_2=1}^{N_1,N_2}$ is the codebook. 
%Both $N_1$ and $N_2$ are bounded by Proposition~\ref{prop:size_bound}. 

For the decoding strategy, take the projective decoders in Equation~\eqref{eq:double-layers}. Then, the decoder corresponding to the word $u_i$ is $\cD_i:=\cD_i^{(1)}\cap\cD_i^{(2)}$ for all $i\in[N]$. If the word sent and tested is the same, say $\vec u_i=\vec o_{s_1,s_2}$, then the probability that the first layer decoder identifies the received sequence as a word around the (correct) first layer element $\vec o_{s_1}$ is:
\begin{align}\label{eq:double-layers_error1_1}
\Pr&(\cD_{i}^{(1)}|\vec u_i)=\Pr\left(\|\Pi_{s_{0}\to s_{1}}\vec{Y}-\vec o_{s_1}\|\!\leq\!\sigma\log n|\vec o_{s_1,s_2}\right)\nonumber\\
&\!\!=\Pr\left(\|\Pi_{s_{0}\to s_{1}}\vec o_{s_1,s_2}-\vec o_{s_1}+\sigma \Pi_{s_{0}\to s_{1}}\vec{Z}\|\leq \sigma\log n\right)\nonumber\\
&\!\!\stackrel{(a)}{=}\Pr\left(\|\Pi_{s_{0}\to s_{1}}\vec{Z}\|\leq\log n\right)\\
&\!\!=1-2\Phi(-\log n)=1-\lambda\nonumber,
\end{align}
where in $(a)$ we have used that $\Pi_{s_{0}\to s_{1}}\vec o_{s_1,s_2}=\vec o_{s_1}$, because the ball $\cB_{s_1}(\sqrt[4]{nc},n-1)$ on the surface of which $\vec o_{s_1,s_2}$ lives is orthogonal to the path vector $\vec v_{s_0\to s_1}$ (see Figure~\ref{fig:subspaces}). The probability $\Pr(\cD_{i}^{(2)}|\vec u_i)$ that the second layer decoder identifies the correct code word is bounded similarly:
\begin{align}\label{eq:double-layers_error1_2}
\Pr&\left(\|\Pi_{s_{1}\to s_{2}}\vec o_{s_1,s_2}-\vec o_{s_1,s_2}+\sigma \Pi_{s_{1}\to s_{2}}\vec{Z}\|\leq \sigma\log n\right)\nonumber\\
&=\Pr\left(\|\Pi_{s_{1}\to s_{2}}\vec{Z}\|\leq\log n\right)\\
&=1-2\Phi(-\log n)=1-\lambda\nonumber,
\end{align}
as $\Pi_{s_{1}\to s_{2}}\vec o_{s_1,s_2}=\vec o_{s_1,s_2}$.
Therefore, by the union bound, the total probability of missed identification satisfies
\begin{equation}
P_{e,1}(\vec u_i)\leq \Pr(\cD_{i}^{(1)^c}|\vec u_i)+\Pr(\cD_{i}^{(2)^c}|\vec u_i)=2\lambda,
\end{equation}
where the superscript $c$ indicates the complementary set.

If the words tested $\vec u_i=\vec{o}_{s_1,s_2}$ and sent $\vec u_j=\vec{o}_{\tilde s_1,\tilde s_2}$ are in different second layer arrangements, i.e.~$s_1\neq\tilde s_1\in[N_1]$, the $\cD_i^{(1)}$ must have small probability of identifying the sent word:
\begin{align}
\Pr&(\cD_{i}^{(1)}|\vec{o}_{\tilde s_1,\tilde s_2})=\Pr\left(\|\Pi_{s_{0}\to s_{1}}\vec{Y}-\vec{o}_{s_1}\|\leq\sigma\log n|\vec{o}_{\tilde s_1,\tilde \tilde s_2}\right)\nonumber\\
&=\Pr\left(\|\Pi_{s_{0}\to s_{1}}\vec{o}_{\tilde s_1,\tilde s_2}-\vec{o}_{s_1}+\sigma \Pi_{s_{0}\to s_{1}}\vec{Z}\|\leq \sigma\log n\right)\nonumber\\
&\stackrel{(b)}{\leq}\Pr\left(\|\Pi_{s_{0}\to s_{1}}\vec{o}_{\tilde s_1,\tilde s_2}-\vec{o}_{s_1}\|-\sigma\|\Pi_{s_{0}\to s_{1}}\vec{Z}\|\leq\sigma\log n\right)\nonumber\\
&\stackrel{(c)}{\leq}\Pr\left(\|\Pi_{s_{0}\to s_{1}}\vec{Z}\|\geq\log n\right)\label{eq:double-layer_error_2_1}\\
&=2\Phi(-\log n)=\lambda,\nonumber
\end{align}
where $(b)$ follows from the triangle inequality, and $(c)$ from our choice of $d$ and Lemma~\ref{lemma:2-layer-delta}, indeed  
\begin{align}
\|\Pi_{s_{0}\to s_{1}}\vec{o}_{\tilde s_1,\tilde s_2}-\vec{o}_{s_1}\|=d-\delta&\geq3\sigma\log n-\sqrt{6\sigma\log n}\nonumber\\
&>2\sigma\log n.\label{eq:double-layer-delta-step}
\end{align}

Finally, if the words tested $\vec u_i=\vec{o}_{s_1,s_2}$ and sent $\vec u_k=\vec{o}_{s_1,\tilde s_2}$ are different elements in the same second layer arrangement, the second layer decoder $\cD_i^{(2)}$ must have small probability of identifying $\vec u_k$
\begin{align}
\Pr&(\cD_{i}^{(2)}|\vec{o}_{s_1,\tilde s_2})\!=\Pr\!\left(\!\|\Pi_{s_{1}\to s_{2}}\vec{Y}-\vec{o}_{s_1,s_2}\|\leq\sigma\log n|\vec{o}_{s_1,\tilde s_2}\!\right)\nonumber\\
&\!\!\leq\Pr\left(\|\Pi_{s_{1}\to s_{2}}\vec{o}_{s_1, \tilde s_2}-\vec{o}_{s_1,s_2}\|-\sigma\|\Pi_{s_{1}\to s_{2}}\vec{Z}\|\leq\sigma\log n\right)\nonumber\\
&\!\!\leq\Pr\left(\|\Pi_{s_{1}\to s_{2}}\vec{Z}\|\geq\log n\right)\label{eq:double-layer_error_2_2}\\
&\!\!=2\Phi(-\log n)=\lambda\nonumber.
\end{align}
As for any two different code words at least one of the two instances above must happen, we can conclude that the probability of false identification is no larger than $P_{e,2}(\vec u_i|\vec u_j)\leq\lambda$.

It is only left to count the size of the code. The total number of words (second layer elements) is $N=N_1\cdot N_2$. Then, by Proposition~\ref{prop:size_bound}, one finds
\begin{equation}
\begin{split}
\log N&=\log N_1+\log N_2\\
&\geq\frac{n}{2}\log\left(\frac{2r_1}{d}\right)+\frac{n-1}{2}\log\left(\frac{2r_2}{d}\right)\\
&=\left(\frac{n}{2}+\frac{n}{4}\right)\log r-O(n\log \log n)\\
&=\frac{3n}{8}\log(n)-O(n\log\log n).
\end{split}
\end{equation}
Thus, the rate is bounded by $\dot R(n)\geq\frac38-O(\log\log n/\log n)$. As the second term can be arbitrarily small for sufficiently large $n$, we conclude that $\dot R=\frac 38$ is an achievable rate.
\end{proof}

The trick of constructing the second layer in certain orthogonal subspaces allows the construction of a good code with two $d$-angle dense layers, each one with a positive contribution to the rate: the first layer contributed $\frac14$ and the second $\frac18$. This reproduces the best known rate results for DI over Gaussian channels with a purely geometric approach, without any use of typicality-based decoding ideas. Can we add more dense layers with linearithmic rate contributions?

%%%%%%%%%%%%%%%%%%%%%%%%
%%%%%%%%%%%%%%%%%%%%%%%%
%%%%%%%%%%%%%%%%%%%%%%%%
%%%%%%%%%%%%%%%%%%%%%%%%

\section{Closing the capacity gap}\label{sec:main_results}
We show in this section how the double-layer projection-based construction can be extended to an arbitrarily large number of layers. The resulting code can achieve a linearithmic rate of $\dot R=\frac12$ (see Theorem~\ref{thm:main}) closing, therefore, the gap between lower and upper bounds in the capacity of deterministic identification over Gaussian channels. Indeed, we conclude in Corollary~\ref{cor:capacity} that $\dot C_\text{DI}(\cG)=\frac12$. 

The unique geometrical nature of our code allows the construction of a capacity achieving universal code, in Section~\ref{ssec:universal}; that is, a code that does not depend on channel parameters (noise and power) upon encoding and decoding, and still attains the channel capacity.

\subsection{Capacity achieving code construction}
Let us construct a code of $L$ layers. Each one is denoted by $\ell\in[L]$, and it is a $d$-angle-dense arrangement of points on the surface of an $(n\!+\!1\!-\!\ell)$-dimensional ball embedded in $\RR^n$
\begin{equation}
\cB_{s^{\ell-1}}\left(r_\ell,n+1-\ell\right)\subset\left[\text{span}\left(\vec v_{s_0\to s_1},\dots,\vec{v}_{s_{\ell-2}\to s_{\ell-1}}\right)\right]^\perp,
\end{equation}
where $s^{\ell-1}=(s_1,\dots,s_{\ell-1})$ is the sequence of the elements, one at each layer, that navigates from the origin (to which we associate $s_0$) to a particular $\ell$-layer system; $r_\ell:=(cn)^{1/2^\ell}$ is the radius of each $\ell$-layer system (notice that $r_\ell=\sqrt{r_{\ell-1}}$, similarly to the double-layer case); the path vector that brings us from a centre $\vec o_{s^{\ell-1}}$ (uniquely associated to the sequence $s^{\ell-1}$) to $\vec o_{s^{\ell}}$ is denoted by $\vec v_{s_{\ell-1}\to s_\ell}:=\vec o_{s^{\ell}}-\vec o_{s^{\ell-1}}$ (notice that $\sum_{\ell=1}^L\vec v_{s_{\ell-1}\to s_\ell}=\vec{o}_{s^L}$); and $\left[\text{span}\left(\vec v_{s_0\to s_1},\dots,\vec{v}_{s_{\ell-2}\to s_{\ell-1}}\right)\right]^\perp$ is the subspace orthogonal to the corresponding path vectors at each layer (which are, in turn, mutually orthogonal between them).
\medskip

\noindent\textbf{Encoding:} The code words will be the $N$ elements of the innermost layer $\{\vec o_{s^L}\}_{s_1=1,\dots,s_L=1}^{N_1,\dots,N_L}$, with $N_\ell$ the number of elements in a particular $\ell$-layer system, which is bounded through Proposition~\ref{prop:size_bound} by:
\begin{equation}\label{eq:encoding_code_size}
\log N_\ell\geq\frac{n+1-\ell}{2}\log \left(\frac{2r_\ell}{d}\right),
\end{equation}
and we have $N=\prod_{\ell=1}^{L}N_{\ell}$. 
Notice that $\|\vec o_{s^L}\|^2=\sum_{\ell=1}^{L}(cn)^{2/2^\ell}=cn+o(n)<nP$, hence for $c<P$ and sufficiently large $n$ the power constraint is satisfied.

As all layers are $d$-angle-dense, we have that two different points $\vec o_{s^{\ell-1},s_\ell}\neq\vec o_{s^{\ell-1},\tilde s_\ell}$ in the arrangement around the same $\ell-1$ layer element $\vec o_{s^{\ell-1}}$ fulfil
\begin{equation}
\|\Pi_{s_{\ell-1}\to s_\ell}\vec o_{s^{\ell-1},\tilde s_\ell}-\vec o_{s^{\ell-1},s_\ell}\| \geq d.    
\end{equation}

\noindent\textbf{Decoding:} The identification effects at each layer $\ell$ corresponding to the particular code word $\vec u_i=\vec o_{s^L}$ are:
\begin{equation}\label{eq:decoder}
\cD_{i}^{(\ell)}=\{\vec{y}\in\RR^n:\|\Pi_{s_{\ell-1}\to s_{\ell}}\vec{y}-\vec{o}_{s^\ell}\|\leq\sigma\log n\},
\end{equation}
Then, the identification decoder corresponding to the message $i\in[N]$ encoded into the word $\vec o_{s^L}$ is $\cD_i=\bigcap_{\ell=1}^L\cD_{i}^{(\ell)}$. 

Similarly to what we saw for the double-layer code (cf.~ in Lemma~\ref{lemma:2-layer-delta}), we observe that for the multi-layer construction above, and thanks to the orthogonal subspace restriction at each layer, the projections of elements in inner layers concentrate around the projections of their centres. In particular:

\begin{lemma}\label{lemma:proj_dist}
Let $\vec o_{s^L}\neq\vec o_{\tilde s^L}$ be two different last-layer elements (code words) which differ at some layer $\ell$, that is, $s_\ell\in s^L$ is different from $\tilde s_\ell\in \tilde s^L$. Then,
\[
\|\Pi_{s_{\ell-1}\to s_\ell}\vec o_{\tilde s^L}-\Pi_{s_{\ell-1}\to s_\ell}\vec o_{s^L}\|> d-\sqrt{2Ld}=d-O(\sqrt{d}).
\]
\end{lemma}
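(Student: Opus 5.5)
Let $\ell$ be the first layer at which the two paths differ, so $s^{\ell-1}=\tilde s^{\ell-1}$ and $s_\ell\neq\tilde s_\ell$. (The statement should be read with this $\ell$; at any later layer the path vectors are different objects.) Write both code words as telescoping sums of path vectors:
\[
\vec o_{s^L}=\vec o_{s^{\ell}}+\sum_{k=\ell+1}^{L}\vec v_{s_{k-1}\to s_k},\qquad
\vec o_{\tilde s^L}=\vec o_{\tilde s^{\ell}}+\sum_{k=\ell+1}^{L}\vec v_{\tilde s_{k-1}\to \tilde s_k},
\]
and handle the two words separately.

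\textbf{Own branch.} The projection of $\vec o_{s^L}$ onto $\vec v_{s_{\ell-1}\to s_\ell}$ is exactly $\vec o_{s^\ell}$, as in Lemma~\ref{lemma:2-layer-delta}. The contribution of $\vec o_{s^{\ell-1}}$ (a sum of earlier path vectors) vanishes, since $\vec v_{s_{\ell-1}\to s_\ell}$ lies in the orthogonal complement of those vectors. Each later vector $\vec v_{s_{k-1}\to s_k}$, $k>\ell$, lies in the complement of the span containing $\vec v_{s_{\ell-1}\to s_\ell}$, so its projection is zero. It remains to check $\Pi_{s_{\ell-1}\to s_\ell}\vec o_{s^\ell}=\vec o_{s^\ell}$ in the paper's convention, where the projection acts on the affine ball around $\vec o_{s^{\ell-1}}$, exactly as in the double-layer proof. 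Hence the target is $\vec o_{s^\ell}$.

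\textbf{Other branch.} By $d$-angle-density of the $\ell$-layer around $\vec o_{s^{\ell-1}}$,
\[
\|\Pi_{s_{\ell-1}\to s_\ell}\vec o_{\tilde s^\ell}-\vec o_{s^\ell}\|\ge d .
\]
It then suffices, by the triangle inequality, to bound the total displacement
\[
\Delta:=\Bigl\|\Pi_{s_{\ell-1}\to s_\ell}\textstyle\sum_{k>\ell}\vec v_{\tilde s_{k-1}\to\tilde s_k}\Bigr\|\le\sum_{k>\ell}\delta_k
\]
by $\sqrt{2Ld}$, where $\delta_k:=\|\Pi_{s_{\ell-1}\to s_\ell}\vec v_{\tilde s_{k-1}\to\tilde s_k}\|$.

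\textbf{Bounding the $\delta_k$.} For $k=\ell+1$ the argument of Lemma~\ref{lemma:2-layer-delta} applies verbatim. The vector $\vec v_{\tilde s_\ell\to\tilde s_{\ell+1}}$ has length $r_{\ell+1}=\sqrt{r_\ell}$ and is orthogonal to $\vec v_{\tilde s_{\ell-1}\to\tilde s_\ell}$. The latter makes an angle $\theta$ with $\vec v_{s_{\ell-1}\to s_\ell}$ satisfying $\cos\theta\le 1-d/r_\ell$ (Lemma~\ref{lemma:angle}). Hence
\[
\delta_{\ell+1}\le r_{\ell+1}\sin\theta\le\sqrt{r_\ell(2d/r_\ell)}=\sqrt{2d}.
\]
For $k>\ell+1$ we cannot control the direction, so we use the crude bound $\delta_k\le\|\vec v_{\tilde s_{k-1}\to\tilde s_k}\|=r_k$. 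But $r_k=(cn)^{1/2^k}$ becomes bounded only at the deepest layers, so this alone fails. A sharper fact is needed: every later path vector lies in the complement of $\vec v_{\tilde s_{\ell-1}\to\tilde s_\ell}$. Decompose $\hat v:=\vec v_{s_{\ell-1}\to s_\ell}/r_\ell$ into its component along $\vec v_{\tilde s_{\ell-1}\to\tilde s_\ell}$ (norm $\cos\theta$) and an orthogonal remainder $w$ with $\|w\|=\sin\theta\le\sqrt{2d/r_\ell}$. Then
\[
\Delta=\Bigl|\Bigl(w,\textstyle\sum_{k>\ell}\vec v_{\tilde s_{k-1}\to\tilde s_k}\Bigr)\Bigr|\le\sin\theta\,\Bigl\|\textstyle\sum_{k>\ell}\vec v_{\tilde s_{k-1}\to\tilde s_k}\Bigr\|.
\]
Since the later path vectors are mutually orthogonal,
\[
\Bigl\|\textstyle\sum_{k>\ell}\vec v_{\tilde s_{k-1}\to\tilde s_k}\Bigr\|^2=\sum_{k>\ell}r_k^2\le (L-\ell)\,r_{\ell+1}^2\le L\,r_\ell ,
\]
using $r_k\le r_{\ell+1}$ and $r_{\ell+1}^2=r_\ell$. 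Therefore $\Delta\le\sqrt{2d/r_\ell}\,\sqrt{L r_\ell}=\sqrt{2Ld}$, and combining with the angle-density gap yields the claimed bound, with strict inequality coming from $\sin^2\theta<2d/r_\ell$ as in Lemma~\ref{lemma:2-layer-delta}.

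\textbf{Main obstacle.} The delicate point is this last step: a single orthogonal decomposition replaces the layer-by-layer accumulation, and the crude sum $\sum_k\delta_k$ would not give $\sqrt{2Ld}$. The mutual orthogonality of all path vectors on the $\tilde s$ branch, which follows from the nested complement construction, must be justified carefully.
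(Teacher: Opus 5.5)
\textbf{Same route as the paper.} Your argument follows the paper's route. The tail $\sum_{k>\ell}\vec v_{\tilde s_{k-1}\to\tilde s_k}$ of the other branch is orthogonal to $\vec v_{\tilde s_{\ell-1}\to\tilde s_\ell}$. Its projection onto $\hat v$ therefore has length at most $\tilde r_{\ell+1}\sin\theta$, and $\tilde r_{\ell+1}^2=\sum_{k>\ell}r_k^2<Lr_\ell$ by Pythagoras. Two of your choices are improvements:
\begin{itemize}
\item Decomposing $\hat v$ into a component along $\vec v_{\tilde s_{\ell-1}\to\tilde s_\ell}$ plus a remainder $w$ makes the paper's appeal to Figure~\ref{fig:subspaces} rigorous.
\item Bounding each layer directly is cleaner than the paper's reduction to $\ell=1$ via $\tilde r_2\sin\theta_1\ge\tilde r_{\ell+1}\sin\theta_\ell$. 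That comparison is asserted without proof and can in fact fail by lower-order terms.
\end{itemize}
Reading $\ell$ as the first differing layer is correct. The convention muddle in your own-branch paragraph is harmless: only $\Pi_{\hat v}(\vec o_{\tilde s^L}-\vec o_{s^L})$ enters, and it is the same under the linear and the affine reading of the projection.

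\textbf{The gap.} The problem is the step $\|w\|=\sin\theta\le\sqrt{2d/r_\ell}$. The same issue affects your $\delta_{\ell+1}\le\sqrt{2d}$ and the strictness claim based on $\sin^2\theta<2d/r_\ell$. Lemma~\ref{lemma:angle} gives only $\cos\theta\le 1-d/r_\ell$, which bounds $\theta$ from \emph{below}, not $\sin\theta$ from above. For a pair with $\theta\approx\pi/2$ one has $\sin\theta\approx 1$, and $\Delta$ can be of order $\tilde r_{\ell+1}\approx\sqrt{r_\ell}$, far above $\sqrt{2Ld}$. So the decoupled split ``separation $\ge d$, displacement $\le\sqrt{2Ld}$'' is false for such pairs. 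The paper makes the same tacit identification of $\theta$ with the minimal angle (it writes $\cos\theta_1=1-d/r_1$).

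\textbf{Repair.} The statement survives if you keep the true separation $r_\ell(1-\cos\theta)$ instead of replacing it by $d$. With $t:=\sqrt{r_\ell}\sin\theta$, your own estimates give
\[
\bigl\|\Pi_{s_{\ell-1}\to s_\ell}\vec o_{\tilde s^L}-\Pi_{s_{\ell-1}\to s_\ell}\vec o_{s^L}\bigr\|\ge r_\ell(1-\cos\theta)-\sqrt{L}\,t .
\]
Now split into cases:
\begin{itemize}
\item If $t\le\sqrt{2d}$, use $r_\ell(1-\cos\theta)\ge d$. This gives the bound $d-\sqrt{2Ld}$ directly.
\item If $t>\sqrt{2d}$, use $1-\cos\theta\ge\frac12\sin^2\theta$, which holds on all of $[0,\pi]$. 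This gives the lower bound $\frac12 t^2-\sqrt{L}\,t$, which is increasing for $t\ge\sqrt{L}$. When $d\ge 2L$ it therefore exceeds its value $d-\sqrt{2Ld}$ at $t=\sqrt{2d}$. When $d<2L$ the claimed bound is negative and trivially true.
\end{itemize}
Strictness comes from $\tilde r_{\ell+1}^2\le(L-\ell)r_\ell<Lr_\ell$, and is trivial when $\sin\theta=0$. With this case split the proof is complete.
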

\begin{proof}
First, notice that every last layer element $\vec o_{s^L}$ differs from its centre $\vec o_{s^\ell}$ at layer $\ell$ by the sum of path vectors $\sum_{j=\ell}^{L-1} \vec v_{s_j\to s_{j+1}}$, each of which is orthogonal to the rest and has norm $\|\vec v_{s_j\to s_{j+1}}\|=r_{j+1}$. Hence, by Pythagoras, $\|\vec o_{s^L}-\vec o_{s^\ell}\|^2:=\tilde r_{\ell+1}^2=\sum_{j=\ell}^{L-1} r_{j+1}^2$. Therefore, and looking at Fig.~\ref{fig:subspaces} for visual intuition, we have that 
\begin{equation}
\|\Pi_{s_{\ell-1}\to s_\ell}\vec o_{\tilde s^L}-\Pi_{s_{\ell-1}\to s_\ell}\vec o_{s^L}\|\geq d-\tilde r_{\ell+1}\sin\theta_\ell.
\end{equation}

Second, for any $\ell\geq 2$, observe that $\tilde r_{2}\sin\theta_1\geq\tilde r_{\ell+1}\sin\theta_\ell$. Additionally, $\tilde r_{2}=\sqrt{\sum_{\ell=2}^{L} r_{\ell}^2}<\sqrt{L}r_2$. Therefore, the separation $\delta$ between the projection of any last layer element and the projection of its centre at layer $\ell$ can always be bounded by:
\begin{equation}
\delta\leq\tilde r_2\sin\theta_1<\sqrt{Lr_1}\sqrt{1-\left(1-\frac{d}{r_1}\right)^2}<\sqrt{2Ld},
\end{equation}
where the second upper bound follows from the previous paragraph and the fact that $\cos \theta_1=1-\frac{d}{r_1}$, and the last bound from discarding the negative terms (which notice that vanish as $n$ increases).
Finally, $\|\Pi_{s_{\ell-1}\to s_\ell}\vec o_{\tilde s^L}-\Pi_{s_{\ell-1}\to s_\ell}\vec o_{s^L}\|\geq d-\delta>d-\sqrt{2Ld}$, and the proof is completed.
\end{proof}

%We have now all the necessary tools to prove the main result:
\begin{theorem}\label{thm:main}
Let $\lambda=2\Phi(-\log n)$, fix $L\in \NN$ a large natural number independent of $n$. Then, with the construction above, one can create a good angle-dense $L$-layer $(n,N,L\lambda,\lambda)$-DI code that achieves a linearithmic rate 
\(
\dot{R}=\frac{1}{2}.
\)
\end{theorem}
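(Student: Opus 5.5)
The proof will mirror Theorem~\ref{thm:double-layer}, layer by layer. First, fix $d$ large enough to absorb the deviation of Lemma~\ref{lemma:proj_dist}, for instance $d=3\sigma\log n$. Since $L$ is a constant, $\sqrt{2Ld}=O(\sqrt{\log n})$, so for large $n$ we have $d-\sqrt{2Ld}>2\sigma\log n$. Take $r_\ell=(cn)^{1/2^\ell}$ with $c<P$; the power constraint was already verified above.

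\textbf{Missed identification.} Let $\vec u_i=\vec o_{s^L}$ be both sent and tested. For each layer $\ell$, the components $\vec v_{s_j\to s_{j+1}}$ with $j\ge\ell$ of $\vec o_{s^L}-\vec o_{s^\ell}$ lie in the orthogonal complement of $\vec v_{s_{\ell-1}\to s_\ell}$. Hence $\Pi_{s_{\ell-1}\to s_\ell}\vec o_{s^L}=\Pi_{s_{\ell-1}\to s_\ell}\vec o_{s^\ell}$.

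This last quantity equals $\vec o_{s^\ell}$ in the paper's convention, read relative to the centre $\vec o_{s^{\ell-1}}$. Concretely, the earlier layers' components are handled because the decoders are understood as affine projections onto the line through $\vec o_{s^{\ell-1}}$ with direction $\vec v_{s_{\ell-1}\to s_\ell}$. I would make this explicit at this point, just as the double-layer proof does implicitly.

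It follows that $\Pr(\cD_i^{(\ell)}|\vec u_i)=\Pr(\|\Pi_{s_{\ell-1}\to s_\ell}\vec Z\|\le\log n)=1-\lambda$ by Proposition~\ref{prop:projection}. A union bound over the $L$ layers then gives $P_{e,1}\le L\lambda$.

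\textbf{False identification.} If $\vec u_j=\vec o_{\tilde s^L}\neq\vec o_{s^L}$, let $\ell$ be the first layer where $s_\ell\ne\tilde s_\ell$, so that both words share the centre $\vec o_{s^{\ell-1}}$. Bound $\Pr(\cD_i|\vec u_j)\le\Pr(\cD_i^{(\ell)}|\vec u_j)$.

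Then apply the reverse triangle inequality exactly as in Equation~\eqref{eq:double-layer_error_2_1}. Lemma~\ref{lemma:proj_dist} gives the deterministic separation $\|\Pi_{s_{\ell-1}\to s_\ell}\vec o_{\tilde s^L}-\vec o_{s^\ell}\|>2\sigma\log n$. The event $\cD_i^{(\ell)}$ therefore requires $\|\Pi\vec Z\|\ge\log n$, which has probability $\lambda$. Both errors vanish, so the code is good.

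\textbf{Rate.} By Equation~\eqref{eq:encoding_code_size},
\[
\log N=\sum_{\ell=1}^L\log N_\ell\ge\sum_{\ell=1}^L\frac{n+1-\ell}{2}\Big(\frac{1}{2^\ell}\log(cn)-\log\frac{d}{2}\Big).
\]
Since $L$ is fixed, this equals $\frac n2\big(1-2^{-L}\big)\log n-O(n\log\log n)$. Dividing by $n\log n$ gives $\dot R(n)\ge\frac12(1-2^{-L})-O(\log\log n/\log n)$. Given $\epsilon>0$, choose $L$ with $2^{-L-1}<\epsilon/2$ and then $n$ large, so that $\dot R=\frac12$ is achievable.

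\textbf{Main obstacle.} The delicate point is the interplay between fixed $L$ and the size of the radii. The innermost radius $r_L=(cn)^{1/2^L}$ must still dominate $d=\Theta(\log n)$ for Proposition~\ref{prop:size_bound} and Lemma~\ref{lemma:proj_dist} to apply, which holds because $L$ is independent of $n$. I would also check carefully that the approximation error in the proposition's $o(1)$, and the losses $\log d$ per layer, stay $O(n\log\log n)$ uniformly over the $L$ layers.
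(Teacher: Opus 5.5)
Your proposal is correct and follows the paper's proof essentially step by step. The paper also takes a union bound over the $L$ layer-wise projective tests for missed identification, uses Lemma~\ref{lemma:proj_dist} with $d=3\sigma\log n$ at a single differing layer for false identification, and sums Proposition~\ref{prop:size_bound} over the layers for the rate. Two points in your write-up the paper leaves implicit, but they are clarifications rather than a different route: you take the \emph{first} differing layer, so that both words share the centre $\vec o_{s^{\ell-1}}$ and angle-density applies, and you read the decoder projections as affine projections relative to $\vec o_{s^{\ell-1}}$.
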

\begin{proof}
We follow the steps in the proof of Theorem~\ref{thm:double-layer}. Start by calculating the probability of correct identification of a message $i\in[N]$ encoded into word $\vec{o}_{s^{L}}$. At the layer $\ell$ the corresponding decoder fulfils:
\begin{align}
\!\!\Pr&(\cD_i^{(\ell)}|\vec{u}_i)=\Pr\left(\|\Pi_{s_{\ell-1}\to s_{\ell}}\vec{Y}-\vec{o}_{s^\ell}\|\leq\sigma\log n|\vec{o}_{s^{L}}\right)\nonumber \\
&=\Pr\left(\|\Pi_{s_{\ell-1}\to s_{\ell}}\vec{o}_{s^{L}}-\vec{o}_{s^\ell}+\sigma \Pi_{s_{\ell-1}\to s_{\ell}}\vec{Z}\|\leq \sigma\log n\right)\nonumber\\
&=\Pr\left(\|\Pi_{s_{\ell-1}\to s_{\ell}}\vec{Z}\|\leq\log n\right)\label{eq:layer_error1}\\
&=1-2\Phi(-\log n)=1-\lambda,\nonumber
\end{align}
as $\Pi_{s_{\ell-1}\to s_{\ell}}\vec{o}_{s^{L}}=\vec{o}_{s^\ell}$. This correct identification probability is the same for all $L$ layers so, by the the union bound, the total missed identification probability is bounded for all $i\in[N]$ by:
\begin{equation}\label{eq:main_Pe1}
\begin{split}
\!\!P_{e,1}(\vec{u}_i):=\Pr\left(\bigcup_{\ell=1}^{L}\cD_i^{(\ell)^c}\vert\vec{u}_i\right)&\leq\sum_{\ell=1}^{L}\Pr\left(\cD_i^{(\ell)^c}\vert\vec{u}_i\right)\\
&=2L\Phi(-\log n)\\
&=L\lambda=:\lambda_1,
\end{split}
\end{equation}
where the superscript $c$ indicates the complementary set. Notice that, for any arbitrarily large $L$ independent on $n$, we have by Proposition~\ref{prop:projection} that $L\lambda=\exp [-O(\log n)^2]$ which vanishes for large values of $n$, so $\lambda_1$ is correctly bounded.

To bound the probability of false identification, we choose $d=3\sigma\log n$ as the minimum projection distance defining the $d$-angle-dense arrangements at each layer. Then, if a word $u_j=\vec{o}_{\tilde{s}^{L}}$ is sent and $\vec{u}_i=\vec{o}_{s^{L}}$ is tested, the probability of error a layer $\ell$ where $s_\ell\neq\tilde{s}_\ell$ is given by
\begin{equation}\label{eq:layer_error2}
\begin{split}
\Pr&(\cD_i^{(\ell)}|\vec{u}_j)=\Pr\left(\|\Pi_{s_{\ell-1}\to s_{\ell}}\vec{Y}-\vec{o}_{s^\ell}\|\leq\sigma\log n|\vec{o}_{\tilde{s}^{L}}\right)\\
&\!=\Pr\left(\|\Pi_{s_{\ell-1}\to s_{\ell}}\vec{o}_{\tilde{s}^{L}}-\vec{o}_{s^\ell}+\sigma \Pi_{s_{\ell-1}\to s_{\ell}}\vec{Z}\|\leq \sigma\log n\right)\\
&\!\leq\Pr\left(\|\Pi_{s_{\ell-1}\to s_{\ell}}\vec{o}_{\tilde{s}^{L}}-\vec{o}_{s^\ell}\|-\sigma\|\Pi_{s_{\ell-1}\to s_{\ell}}\vec{Z}\|\leq\sigma\log n\right)\\
&\!\stackrel{(a)}{\leq}\Pr\left(\|\Pi_{s_{\ell-1}\to s_{\ell}}\vec{Z}\|\geq\log n\right)\\
&=2\Phi(-\log n)=\lambda,
\end{split}
\end{equation}
where we have used in $(a)$ that $\|\Pi_{s_{\ell-1}\to s_{\ell}}\vec{o}_{\tilde{s}^{L}}-\vec{o}_{s^\ell}\|\geq d-\delta\geq 2\sigma\log n$ [using Lemma~\ref{lemma:proj_dist} with the same argument as that in Equation~\eqref{eq:double-layer-delta-step}]. Now, we can simply upper bound the false identification error probabilities in all the other layers by 1. Then, the total false identification error probability is bounded for all $i\neq j\in[N]$ by:
\begin{equation}\label{eq:main_Pe2}
P_{e,2}(\vec{u}_i|\vec{u}_j)=\Pr\left(\cD_i|\vec{u}_j\right)\leq\Pr\left(\cD_i^{(\ell)}|\vec{u}_j\right)\leq\lambda:=\lambda_2.
\end{equation}

Both errors are correctly bounded. It is only left to count the number of elements in the code. From Proposition~\ref{prop:size_bound} [or, equivalently, using Equation~\eqref{eq:encoding_code_size}], and remembering the choice of radius $r_\ell=(cn)^{1/2^\ell}$ at each layer $\ell$, we have
\begin{equation}
\begin{split}
\log N_\ell &\geq\frac{n+1-\ell}{2}\left[\frac{\log n}{2^\ell}-\log\log n+O(1)\right]\\
&\geq\frac{n\log n}{2}\left[\frac{1}{2^\ell}-O\left(\frac{\log \log n}{\log n}\right)\right].
\end{split}
\end{equation}

The total number of elements in our code is given by $N=\prod_{\ell=1}^{L}N_{\ell}$ which implies $\log N=\sum_{\ell=1}^{L}\log N_{\ell}$. Then, the linearithmic rate is bounded by
\begin{equation}\label{eq:proof_rate}
\begin{split}
\dot R (n):&=\frac{\log N}{n\log n}=\frac{1}{n\log n}\sum_{\ell=1}^{L}\log N_\ell\\
&\geq\frac12\sum_{\ell=1}^{L}\frac{1}{2^\ell}-O\left(\frac{\log\log n}{\log n}\right).
\end{split}
\end{equation}

Notice that by choosing a large enough $L$ one can get the sum $\sum_{\ell=1}^{L}\frac{1}{2^\ell}$ to be arbitrarily close to $1$. Thus, as the second lower order term vanishes as $n\to\infty$, given any $\epsilon>0$ we can find a sufficiently large value of $L$ and a threshold $n_0$ such that for all $n>n_0$ we have $\dot R(n)\geq\frac12-\epsilon$. In other words, $\dot R=\frac12$ is an achievable rate.
\end{proof}

\begin{corollary}\label{cor:capacity}
The deterministic identification capacity of an additive white Gaussian noise channel is $\dot{C}_\text{DI}(\cG)=\frac12$.
\end{corollary}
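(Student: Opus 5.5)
The corollary will follow by sandwiching $\dot C_\text{DI}(\cG)$ between a matching achievability bound and converse bound. For the lower bound I will invoke Theorem~\ref{thm:main} directly. For every fixed $L$, it provides good $(n,N,L\lambda,\lambda)$ codes with $\lambda=2\Phi(-\log n)\to0$ and
\[
\dot R(n)\geq\tfrac12\sum_{\ell=1}^L 2^{-\ell}-O\!\left(\tfrac{\log\log n}{\log n}\right).
\]
Given $\epsilon>0$, I will first fix $L$ with $\frac12(1-2^{-L})\geq\frac12-\epsilon/2$. I will then pick $n_0$ large enough that three things hold for $n\geq n_0$: the lower-order term is below $\epsilon/2$, $L\lambda\to0$, and the power constraint holds because $cn+o(n)<nP$ for a fixed $c<P$. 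Then $\frac12$ satisfies the definition of an achievable rate, and by definition~\eqref{eq:def_capacity} we get $\dot C_\text{DI}(\cG)\geq\frac12$.

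For the upper bound I will use the known converse $\dot C_\text{DI}(\cG)\leq\frac12$ from \cite{SPBD:DI_power,DI-fading}, recalled in~\eqref{eq:C_galaxy_old}. That converse concerns exactly the AWGN channel~\eqref{eq:AWGN} under the block power constraint~\eqref{eq:power_constraint}, with the same definition of a good code and the same linearithmic normalisation $\log N/(n\log n)$.

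The only point needing care is that the two statements are compatible. I will check that the error criteria coincide: vanishing $\lambda_1,\lambda_2$, and not merely $\lambda_1+\lambda_2<1$ fixed. Since our codes have vanishing errors, they are in particular admissible for any converse that assumes only vanishing errors, or even only fixed $\lambda_1+\lambda_2<1$. Hence the converse applies to them, and no gap arises from differing definitions.

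Combining the two inequalities then gives $\dot C_\text{DI}(\cG)=\frac12$. I do not expect a real obstacle here. The substantive work is already done in Theorem~\ref{thm:main} and in the cited converse, and the argument reduces to the order of quantifiers: choose $L$ from $\epsilon$ first, and only then $n_0$.
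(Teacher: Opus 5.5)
Your proposal is correct and is the same argument as the paper's: achievability of $\frac12$ from Theorem~\ref{thm:main}, with $L$ fixed from $\epsilon$ before choosing $n_0$, combined with the known converse $\dot C_\text{DI}(\cG)\le\frac12$. The only difference is the citation for the converse: the corollary's proof cites \cite{DI-fading,RRGauss-arXiv}, while you cite \cite{SPBD:DI_power,DI-fading}, which agrees with how the paper's introduction attributes that upper bound.
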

\begin{proof}
$\dot R=\frac12$ is achievable, so $\dot{C}_\text{DI}(\cG)\geq\frac12$. We know from \cite{DI-fading,RRGauss-arXiv} that $\dot{C}_\text{DI}(\cG)\leq\frac12$. So, necessarily, $\dot{C}_\text{DI}(\cG)=\frac12$.
\end{proof}

\subsection{Capacity-achieving universal code}\label{ssec:universal}
In this section we show that it is possible to construct a capacity-achieving DI code whose encoder and decoder do not depend on the channel parameters. Codes with this property are commonly referred to as \emph{universal codes}. In other words, the same coding and decoding rules achieve the capacity of DI over any AWGN channel, without requiring knowledge of the signal-to-noise ratio and, in our case, even without knowledge of the power constraint.

Universality has been extensively studied for Shannon's message transmission. For discrete memoryless and finite-state channels, universal coding schemes that operate without knowledge of the channel transition probabilities are well established \cite{CK:book2011,Ziv-universal}. For Gaussian channels, there exist universal constructions which remove the dependence of the decoder on the noise variance but cannot achieve capacity \cite{LN:General-universal,Gaussian-universal} unless feedback is used to adapt the transmission rate to the realized channel conditions \cite{rateless-codes}.

Our universal construction achieves capacity for deterministic identification over Gaussian channels, and even goes one step further: not only does the decoder operate without knowledge of the noise variance, but the encoder itself does not require knowledge of the power constraint. Nevertheless, the resulting code words remain inside the admissible input region, for sufficiently large block lengths. 

To establish this result, we first identify the steps in the previous proof where knowledge of the channel parameters (noise variance and power constraint) is needed. The first such step appears in the choice of radii. Recall that the radius of the angle-dense arrangement at layer $\ell$ was defined as $r_\ell=(cn)^{1/2^\ell}$ with the crucial requirement $c<P$. If this condition is not satisfied, the resulting code words may fall outside the input space. Hence the choice of radii in the previous construction depends on the value of $P$. The second dependence arises in the choice of the projective decoding range, which depends on the noise level. In order to accept a word at each layer, its projection onto the corresponding path vector must lie within a distance $\sigma \log n$ of a given centre [see Equation~\eqref{eq:decoder}]. Other parameters in the construction, such as $d=3\sigma\log n$, $\theta_\ell$, or $\delta$, also involve $\sigma$ and/or $P$. However, these dependencies ultimately stem from the definitions of the radii and the decoding range described above. 

\noindent \textbf{Encoding:}
Define for a small value $b>0$ the following radius at each layer $r_\ell=n^{(1-b)2^{-\ell}}$. This definition avoids the constant parameter $c$ (dependent on $P$), at the cost of slightly reducing the scaling of the radii at each layer, which will not ultimately affect the achievable rate. Then, mimic the angle-dense strategy from the previous section. For sufficiently large $n$ the code words $\vec o_{s^L}$, with $s^L=(s_1,\dots,s_L)$ and $s_\ell\in [N_\ell]$ for $\ell\in[L]$, are in the input space restricted by the power constraint:
\begin{equation}\label{eq:uni_radi}
\sum_{\ell=1}^{L}(n^{1-b})^{2/2^\ell}=O(n^{1-b})<nP.
\end{equation}

\noindent \textbf{Decoding:}
Simply redefine the decoding range at each layer $\ell$ as $\log n$ (removing $\sigma$). The decoder corresponding to the message $i\in[N]$ encoded into the code word $\vec o_{s^L}$ does not depend on $\sigma$ at any layer $\ell$:
\begin{equation}\label{eq:uni_decoder}
\cD_{i}^{(\ell)}=\{\vec{y}\in\RR^n:\|\Pi_{s_{\ell-1}\to s_{\ell}}\vec{y}-\vec{o}_{s^\ell}\|\leq\log n\}.
\end{equation}
Accordingly, the choice of the projective distance also needs to be changed to $d=3\log n$. 
\medskip

Although these modifications can have a negative impact on the value of the rate and the error performance, we will show they do not affect at first order, so the rate is asymptotically unchanged and so are the errors. The result is a good capacity-achieving universal DI code, with errors vanishing for $n\to\infty$.

\begin{theorem}\label{thm:uni_code}
Let $\lambda'=2\Phi(-\frac{\log n}{\sigma})$. Then, with the coding construction above, a good universal $(n, N,L\lambda',\lambda')$ DI code exists, achieving a linearithmic rate $\dot R\geq\frac12$.
\end{theorem}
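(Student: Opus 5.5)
The plan is to repeat the proof of Theorem~\ref{thm:main} essentially line by line, now with radii $r_\ell=n^{(1-b)2^{-\ell}}$, decoding range $\log n$ (no $\sigma$), and $d=3\log n$, and to check at each step that the dependence on $\sigma$ and $P$ only enters through vanishing or lower-order terms. Admissibility of the code words is already settled by Equation~\eqref{eq:uni_radi}: the path vectors are mutually orthogonal, so $\|\vec o_{s^L}\|^2=\sum_\ell r_\ell^2=O(n^{1-b})<nP$ once $n$ is large. The threshold $n_0$ depends on $P$, but the code itself does not.

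\textbf{Missed identification.} If $\vec u_i=\vec o_{s^L}$ is sent, then $\Pi_{s_{\ell-1}\to s_\ell}\vec o_{s^L}=\vec o_{s^\ell}$ by orthogonality, exactly as in Equation~\eqref{eq:layer_error1}. The event $\cD_i^{(\ell)}$ therefore reduces to $\sigma\|\Pi_{s_{\ell-1}\to s_\ell}\vec Z\|\leq\log n$. By Proposition~\ref{prop:projection} this has probability $1-2\Phi(-\log n/\sigma)=1-\lambda'$. A union bound over the $L$ layers, as in Equation~\eqref{eq:main_Pe1}, gives $\lambda_1\leq L\lambda'$. This vanishes, since $\lambda'\leq\frac{\sigma}{\log n}e^{-(\log n)^2/(2\sigma^2)}$ for fixed $\sigma$ and fixed $L$.

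\textbf{False identification.} Take $\vec o_{s^L}\neq\vec o_{\tilde s^L}$ and let $\ell$ be a layer with $s_\ell\neq\tilde s_\ell$. Lemma~\ref{lemma:proj_dist} is purely geometric: it only uses $d$-angle-density and $d\ll r_1$, and it gives projected separation $d-\delta>3\log n-\sqrt{6L\log n}$. This step is the main obstacle. The lemma must still hold for the new radii (it does, since $d/r_1\to0$). Beyond that, the separation must exceed \emph{twice} the decoding range to leave room for noise. Here $3\log n-\sqrt{6L\log n}\geq 2\log n$ holds for $n$ larger than a threshold depending on $L$ only. The reverse triangle inequality step of Equation~\eqref{eq:layer_error2} then requires $\sigma\|\Pi\vec Z\|\geq(d-\delta)-\log n\geq\log n$. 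Hence $\Pr(\cD_i|\vec u_j)\leq\Pr(\cD_i^{(\ell)}|\vec u_j)\leq 2\Phi(-\log n/\sigma)=\lambda'$. Note that without $\sigma$ in the range, the margin $\log n$ is compared with noise of scale $\sigma$. This only changes the Gaussian tail's constant in the exponent, which is harmless for any fixed $\sigma$.

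\textbf{Rate count.} Proposition~\ref{prop:size_bound} with $r_\ell=n^{(1-b)2^{-\ell}}$ and $d=3\log n$ gives
\[
\log N_\ell\geq\frac{n+1-\ell}{2}\left[\frac{(1-b)\log n}{2^\ell}-\log\log n-O(1)\right].
\]
Summing over $\ell\in[L]$ gives $\dot R(n)\geq\frac{1-b}{2}\sum_{\ell=1}^L 2^{-\ell}-O(\log\log n/\log n)$. Given $\epsilon>0$, choose $b$ small and $L$ large so that this exceeds $\frac12-\epsilon$ for large $n$. Neither choice involves $\sigma$ or $P$, so the same sequence of codes works for every AWGN channel. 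Combined with the converse from Corollary~\ref{cor:capacity}, this shows the universal code is capacity achieving.
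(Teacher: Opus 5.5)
Your proposal is correct and follows the paper's proof essentially step by step: per-layer acceptance probability $1-\lambda'$ with a union bound over the $L$ layers, false identification at a differing layer via Lemma~\ref{lemma:proj_dist} with $d=3\log n$, and the rate count from Proposition~\ref{prop:size_bound} with $r_\ell=n^{(1-b)2^{-\ell}}$; you are in fact a bit more careful than the paper about the $\sigma$-dependent Gaussian tail exponent and the $L$-dependent threshold for $d-\delta\geq 2\log n$. The one justification to tighten is why Lemma~\ref{lemma:proj_dist} still applies: it relies not only on $d\ll r_1$ but also on the hierarchy $r_{\ell+1}=\sqrt{r_\ell}$, which gives $\tilde r_{\ell+1}^2\leq L r_\ell$ and hence $\delta<\sqrt{2Ld}$, and the universal radii preserve this hierarchy exactly.
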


\begin{proof}
We follow the proof of Theorem~\ref{thm:main}. The probability of correct identification of the word $\vec{u}_i=\vec{o}_{s^{L}}$ at $\ell$ is
\begin{equation}\label{eq:uni_layer_error1}
\begin{split}
\Pr&(\cD_i^{(\ell)}|\vec{u}_i)=\Pr\left(\|\Pi_{s_{\ell-1}\to s_{\ell}}\vec{Y}-\vec{o}_{s^\ell}\|\leq\log n|\vec{u}_i\right)\\
&=\Pr\left(\|\Pi_{s_{\ell-1}\to s_{\ell}}\vec{u}_i-\vec{o}_{s^\ell}+\sigma \Pi_{s_{\ell-1}\to s_{\ell}}\vec{Z}\|\leq \log n\right)\\
&=\Pr\left(\|\Pi_{s_{\ell-1}\to s_{\ell}}\vec{Z}\|\leq\frac{\log n}{\sigma}\right)\\
&=1-2\Phi\left(-\frac{\log n}{\sigma}\right)=1-\lambda',
\end{split}
\end{equation}
and as this error probability is the same for all $L$ layers we get [similarly as in Equation~\eqref{eq:main_Pe1}] 
\begin{equation}\label{eq:uni_Pe1}
P_{e,1}\leq L\lambda'=O(e^{-(\log n)^2}).
\end{equation}
The error scaling is not affected, as $\sigma$ (while unknown) is a constant, meaning that the order of decay of the error with $n$ actually does not change,
thus the error still vanishes as $n\to\infty$ for any value of $\sigma$. 

For the probability of false identification we have chosen $d=3\log n$ as the minimum projection distance that defines our $d$-angle-dense arrangements at each layer. Then, if a word $\vec u_j=\vec{o}_{\tilde{s}^{L}}$ is sent and $\vec{u}_i=\vec{o}_{s^{L}}$ is tested, the probability of committing an error at a layer $\ell$ where $s_\ell\neq\tilde{s}_\ell$ is given by
\begin{equation}\label{eq:uni_layer_error2}
\begin{split}
\!\!\!\!\Pr&(\cD_i^{(\ell)}|\vec{u}_j)=\Pr\left(\|\Pi_{s_{\ell-1}\to s_{\ell}}\vec{Y}-\vec{o}_{s^\ell}\|\leq\log n|\vec{o}_{\tilde{s}^{L}}\right)\\
%&=\Pr\left(\|\Pi_{s_{\ell-1}\to s_{\ell}}\vec{u}_j-\vec{o}_{s^\ell}+\sigma \Pi_{s_{\ell-1}\to s_{\ell}}\vec{Z}\|\leq \log n)\\
&\leq\Pr\left(\|\Pi_{s_{\ell-1}\to s_{\ell}}\vec{o}_{\tilde{s}^{L}}-\vec{o}_{s^\ell}\|\!-\!\sigma\|\Pi_{s_{\ell-1}\to s_{\ell}}\vec{Z}\|\!\leq\!\log n\right)\\
&\leq\Pr\left(\|\Pi_{s_{\ell-1}\to s_{\ell}}\vec{Z}\|\geq\frac{\log n}{\sigma}\right)\\
&=2\Phi\left(-\frac{\log n}{\sigma}\right)=\lambda'.
\end{split}
\end{equation}
Then, the total false identification error probability is given for all $i\neq j\in[N]$ by 
\(
P_{e,2}(\vec{u}_i,\vec{u}_j)\leq\Pr\left(\cD_i^{(\ell)}|\vec{u}_j\right)\leq\lambda'.
\)

Both errors are correctly bounded. It is only left to count the number of elements in the code with Proposition~\ref{prop:size_bound}. At layer $\ell$ we have an arrangement of radius $r_\ell=n^{(1-b)/2^{\ell}}$ on a ball of dimension $n-\ell$, therefore:
\begin{equation}
\begin{split}
\log N_\ell &\geq\frac{n-\ell}{2}\left[\frac{1-b}{2^\ell}\log n-\log\log n+O(1)\right]\\
&\geq\frac{n\log n}{2}\left[\frac{1-b}{2^\ell}-O\left(\frac{\log \log n}{\log n}\right)\right].
\end{split}
\end{equation}
Therefore, the linearithmic rate of the code is given by 
\begin{equation}\label{eq:uni_rate}
\dot R(n)=\sum_{\ell=1}^{L}\frac{\log N_\ell}{n\log n}
\geq\frac{1-b}{2}\sum_{\ell=1}^{L}\frac{1}{2^\ell}-O\left(\frac{\log\log n}{\log n}\right).
\end{equation}

Choosing $L$ large enough one can get the sum $\sum_{\ell=1}^{L}\frac{1}{2^\ell}$ to be arbitrarily close to $1$, and $b>0$ can also be chosen as small as one wishes. Therefore, given any $\epsilon>0$ one can find a sufficiently large value of $L$, a threshold $n_0$, and choose small enough $b$ such that $\dot R(n)\geq\frac12-\epsilon$ for all $n>n_0$, meaning that $\dot R=\frac12$ is an achievable rate. Thus, this universal code achieves the channel capacity.
\end{proof}

%%%%%%%%%%%%%%%%%%%%%%%%
%%%%%%%%%%%%%%%%%%%%%%%%
%%%%%%%%%%%%%%%%%%%%%%%%
%%%%%%%%%%%%%%%%%%%%%%%%

\section{Rate-reliability tradeoff}\label{sec:RR_tradeoff}
In the previous sections we studied a specific construction that achieves capacity, ensuring that the first-order rate asymptotically attains the optimal value while maintaining a vanishing probability of error. In this section we study the rate–reliability tradeoff extending these methods to general error regimes. In other words, we investigate the achievable rates under different fixed error requirements rather than only requiring the probability of error to vanish.

For DI, this tradeoff was first studied for channels with continuous input and discrete output in \cite{CDBW:Reliability-TCOM,DI-steins}, and later extended to general linear Gaussian channels in \cite{RRGauss-arXiv}. The direct results in these works rely on typicality-based arguments and exhibit the expected $\frac{1}{4}$ gap between the lower and upper bounds on the rate–reliability functions.
We show that the construction introduced in the previous sections improves the lower bound in \cite{RRGauss-arXiv}, matching at first order the converse bound derived there in the regime of small error exponents. That upper bound derived in \cite{RRGauss-arXiv} is:
\begin{equation}\label{eq:RR_upper_bound}
R(n)\leq\frac12\log\frac{1}{E(n)}+\frac12\log\frac{8P}{\sigma^2},
\end{equation}
where $R(n):=\frac1n \log N$ is the linear rate, and $E(n)=\min\{E_1(n),E_2(n)\}$ is the minimum of the error exponents, defined as $E_1(n):=-\frac1n\log\lambda_1$ and $E_2(n):=-\frac1n\log\lambda_2$, and fulfils $E(n)\geq\frac{\ln16}{n}$.

Notice that, for constant error exponents, the rate can only be linear. Indeed, the linear rate in Equation~\eqref{eq:RR_upper_bound} is upper bounded by a constant. However, when the error exponent is a decreasing function in $n$ (that means, the errors vanish slower than exponentially), the linear rate diverges in $n$, meaning that we have a superlinear scaling of the rate. Indeed, in the limiting case of the slowest possible vanishing speed of the errors, when $E(n)=\Omega(1/n)$, we recover the linearithmic behaviour with the prefactor $\frac12$ in the first order term.

\begin{theorem}\label{thm:RR}
Given an AWGN channel with input power constraint $P>0$ and noise variance $\sigma^2>0$, one can construct an $(n,N,L2^{-nE(n)},2^{-nE(n)})$ DI code for all $E(n)\leq E_0:=\frac{9P}{\sigma^2}$ with $L\gg1$ angle-dense layers achieving a linear rate of
\[
R\geq \frac12\log\frac{1}{E(n)}+\frac12\log\frac{P}{9L\sigma^2}.
\]
\end{theorem}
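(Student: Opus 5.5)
The plan is to reuse the $L$-layer angle-dense construction of Theorem~\ref{thm:main}, replacing the logarithmic decoding range $\sigma\log n$ by a range $\sigma t_n$ tuned to the target exponent. Concretely, I would set the acceptance radius at every layer to $\sigma t$ with $t:=\sqrt{2\ln 2\,nE(n)}$, so that, by Proposition~\ref{prop:projection}, $2\Phi(-t)\le e^{-t^2/2}\le 2^{-nE(n)}$. The minimum projected distance becomes $d=3\sigma t$, and the radii are $r_1=\sqrt{cn}$ with $c<P$ (eventually $c\to P$) and $r_\ell=\sqrt{r_{\ell-1}}$, exactly as before. Alternatively, one can rescale the hierarchy so that each layer radius is chosen relative to $d$. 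I would keep the nested orthogonal-subspace structure, so the codewords still lie in the power ball.

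The error analysis then copies Equations~\eqref{eq:layer_error1}--\eqref{eq:main_Pe2} verbatim with $\log n$ replaced by $t$. Missed identification is a union bound over the $L$ layers, giving $L\cdot 2\Phi(-t)\le L2^{-nE(n)}$. False identification uses the single layer $\ell$ where the two paths first differ. By Lemma~\ref{lemma:proj_dist}, the projected separation there is at least $d-\sqrt{2Ld}$, and I need this to be at least $2\sigma t$. That amounts to $\sigma t\ge\sqrt{6L\sigma t}$, i.e.\ $t\ge 6L/\sigma$. This is where the lower restriction on $E(n)$, and the role of $L$, enters.

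The main obstacle is the rate count, and in particular the regime $E(n)$ close to a constant, which is where $E_0=9P/\sigma^2$ comes from. By Proposition~\ref{prop:size_bound}, layer $\ell$ contributes $\frac{n+1-\ell}{2}\log\frac{2r_\ell}{d}$. The first layer gives $\frac n2\log\frac{2\sqrt{cn}}{3\sigma t}=\frac n2\log\frac{4cn}{9\sigma^2\cdot 2\ln2\, nE(n)}$, which is $\frac n2\left[\log\frac1{E(n)}+\log\frac{P}{9\sigma^2}\cdot O(1)\right]$. This is already the leading term, and $E(n)\le E_0$ is exactly what keeps the argument of the logarithm at least $1$. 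The inner layers have radii $r_\ell$ that may be smaller than $d$ when $E(n)$ is large. In that case I would simply drop them, or choose $L$ adaptively, since they only add nonnegative contributions whenever $r_\ell\ge d/2$.

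To obtain the stated constant $\frac12\log\frac{P}{9L\sigma^2}$, I would not try to gain from the inner layers at all. Instead I would absorb the loss coming from Lemma~\ref{lemma:proj_dist} (the $\sqrt{2Ld}$ shift) by inflating $d$ by a factor depending on $L$. The resulting $\log L$ penalty per dimension is where the $L$ in the constant appears. The outcome is $R=\frac1n\log N\ge\frac12\log\frac1{E(n)}+\frac12\log\frac{P}{9L\sigma^2}$. When $E(n)=\Theta(1/n)$, summing all layers as in Equation~\eqref{eq:proof_rate} recovers the full $\frac12$ linearithmic prefactor, which confirms first-order optimality against Equation~\eqref{eq:RR_upper_bound}. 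The delicate bookkeeping is choosing $c$, $t$ and $d$ so that the constants land exactly on $9L$; I expect this to need a careful but routine matching of the $d-\delta\ge 2\sigma t$ condition with the $\log(2r_1/d)$ term.
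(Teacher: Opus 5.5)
Your error analysis matches the paper's; your $t=\sqrt{2\ln 2\,nE(n)}$ with $2\Phi(-t)\le e^{-t^2/2}$ is even slightly sharper than its $x(n)=\sqrt{2nE(n)}$. The rate count, however, has a genuine gap, and it starts with an arithmetic slip. Since $2r_1/d$ scales like $E(n)^{-1/2}$, we have $\frac n2\log\frac{2\sqrt{cn}}{3\sigma t}=\frac n4\log\frac{4cn}{9\sigma^2t^2}$, not $\frac n2\log\frac{4cn}{9\sigma^2t^2}$. So the first layer alone gives only $R\ge\frac14\log\frac1{E(n)}+O(1)$, which is the old $\frac14$-type bound the theorem is meant to improve. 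Your plan to ``not try to gain from the inner layers at all'' therefore cannot reach the coefficient $\frac12$, and an $L$-dependent inflation of $d$ does not change that. Your own closing remark, that all layers are needed when $E(n)=\Theta(1/n)$, should have flagged this. By your count one layer would already give $\frac12\log n$ there, whereas the identical computation in Theorem~\ref{thm:single_layer} gives a single layer only $\frac14$.

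The inner layers are indispensable. Layer $\ell$ must contribute about $2^{-(\ell+1)}\log\frac1{E(n)}$, so that the total is $\frac12(1-2^{-L})\log\frac1{E(n)}$. With your fixed hierarchy $r_\ell=(cn)^{2^{-\ell}}$ and $d=3\sigma t\asymp\sqrt{nE(n)}$, this fails. For $\ell\ge2$ you get $\log\frac{2r_\ell}{d}=\frac12\log\frac1{E(n)}-\bigl(\frac12-2^{-\ell}\bigr)\log n+O(1)$, so the layer contributes nothing at first order unless $E(n)\ll n^{-(1-2^{1-\ell})}$. For example, at $E(n)=n^{-1/2}$ only the first layer survives, giving $R\approx\frac18\log n$ instead of $\frac14\log n$. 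The hierarchy $r_{\ell+1}=\sqrt{r_\ell}$ is tuned to polylogarithmic $d$, i.e.\ to $E(n)=n^{-1+o(1)}$. That is why it works in Theorem~\ref{thm:main} and not here.

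The missing idea, which you mention only in passing as an ``alternative'', is to tie the radii to $d$ rather than to $n$. The paper chooses $r_{\ell+1}$ so that the leakage $\delta\lesssim\sqrt L\,r_{\ell+1}\sin\theta_\ell\approx\sqrt L\,r_{\ell+1}\sqrt{2d/r_\ell}$ (the quantity controlled in the proof of Lemma~\ref{lemma:proj_dist}) equals the available slack $\sigma x(n)=d/3$. This gives $r_{\ell+1}=\sqrt{r_\ell\,\sigma x(n)/(6L)}$, equivalently $\frac{18L\,r_{\ell+1}}{d}=\sqrt{\frac{18L\,r_\ell}{d}}$. In other words, the scale-free ratio $r_\ell/d$, rather than $r_\ell$ itself, follows the square-root hierarchy. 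With $r_1=\sqrt{nP/2}$ one gets $\frac{2r_\ell}{d}=\frac1{9L}\bigl(\frac{9PL^2}{\sigma^2E(n)}\bigr)^{2^{-\ell}}$, so every layer contributes its $\frac n{2^{\ell+1}}\log\frac1{E(n)}$.

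This choice also does three other jobs.
\begin{itemize}
\item It builds the requirement $d-\delta\ge2\sigma x(n)$ directly into the construction.
\item The power constraint for these radii, $Lr_2^2=\frac{n\sigma}6\sqrt{PE(n)}\le\frac{nP}2$, is what produces $E_0=9P/\sigma^2$. It does not come from positivity of the first-layer logarithm, as you suggest.
\item The $9L$ in the constant is this per-layer factor $\frac1{9L}$ in $2r_\ell/d$, not an inflation of $d$.
\end{itemize}
Because that factor is paid in each of the $L$ layers, the robust part of the bound is the first-order term $\frac12\log\frac1{E(n)}$. Your bookkeeping should secure that term rather than the exact constant.
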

\begin{proof}
Start by observing that, in our construction, the error performance is dictated by the choice of the accepted range in the decoders at each layer, let us call it $\sigma x(n)$ [see Equation~\eqref{eq:decoder}, there $x(n)=\log n$]. Then, the first type of error at each layer is bounded by $\lambda=2\Phi[-x(n)]$, cf.~Equation~\eqref{eq:layer_error1}. Recalling Proposition~\ref{prop:projection}, and by choosing $x(n)=\sqrt{2nE(n)}$ the following exponential error expression is obtained:
\begin{equation}
\lambda=2\Phi\left(-\sqrt{2nE(n)}\right)\leq 2^{-nE(n)}.
\end{equation}
The total probability of missed identification in a code of $L$ layers is therefore bounded by $\lambda_1\leq L\lambda$. Then, for any arbitrarily large constant $L$, the error converges to 0 when $n\to\infty$.

The second type of error (false identification) can be similarly bounded. Choosing $d=3\sigma x(n)$ one observes [cf.~Equation~\ref{eq:main_Pe2}], that $\lambda_2:=\lambda=2^{-nE(n)}$. Thus, the choices of $x(n)=\sqrt{2nE(n)}$ and $d=3\sigma x(n)$ produce a code with errors that can be written in the desired exponential form. The decoder at each layer is:
\begin{equation}\label{eq:RR_decoder}
\cD_{i}^{(\ell)}=\left\{\vec{y}\in\RR^n:\|\Pi_{s_{\ell-1}\to s_{\ell}}\vec{y}-\vec{o}_{s^\ell}\|\leq\sigma\sqrt{2nE(n)}\right\}.
\end{equation}

Let us study now how the previous definitions fix the code structure. From Lemma~\ref{lemma:angle} one can extract 
\begin{equation}\label{eq:sin1}
    \sin \theta_\ell\simeq\sqrt{\frac{2d}{r_\ell}},
\end{equation}
where we have used the small-angle approximation $\sin\frac{\theta_{\ell}}{2}\approx\frac12\sin\theta_{\ell}$, which becomes asymptotically exact as $n\to\infty$ since $d=o(r_\ell)$, i.e.~$r_\ell$ grows strictly faster than $d$ with the block length. 

Similarly, to make the second type of error work with the previous choice of $d$, it is necessary that $\delta\leq \sigma x(n)$, such that $\|\Pi_{s_{\ell-1}\to s_{\ell}}\vec{o}_{\tilde{s}^{L}}-\vec{o}_{s^\ell}\|\geq d-\delta\geq 2\sigma\log n$. This is ensured if 
\begin{equation}\label{eq:sin2}
\sqrt{L}r_{\ell+1}\sin\theta_\ell=\sigma x(n)
\end{equation}
Then, equating the expressions \eqref{eq:sin1} and \eqref{eq:sin2}, one obtains the following relation between the radius at one layer and the radius at the next:
\begin{equation}\label{eq:RR_hierarchy}
    r_{\ell+1}=\sqrt{\frac{r_\ell\sigma x(n)}{6L}}.
\end{equation}
If one now fixes $r_1=\sqrt{\frac{Pn}{2}}$ as the radius at the first layer, the expression above defines the radius at all other layers. The following relation is obtained:
\begin{equation}\label{eq:RR_radius}
r_\ell = \sqrt{\rho n E(n)}\left[\frac{P}{2\rho E(n)}\right]^\frac{1}{2^\ell}\!\!,\,\,\,\,\text{with}\,\,\,\rho:=\left(\frac{\sigma}{3\sqrt{2}L}\right)^2\!\!.
\end{equation}

As the words have to be inside the input space, it is required that $\sum_{\ell=1}^L r_\ell^2\leq nP$, which is necessary fulfilled if $L r_2^2\leq nP-r_1^2=\frac{nP}{2}$. Then, as
\begin{equation}
L r_2^2=L\rho n E(n)\sqrt{\frac{P}{2\rho E(n)}}=n\frac{\sigma}{6}\sqrt{PE(n)},
\end{equation}
it is necessary that
\begin{equation}
\frac{\sigma}{6}\sqrt{PE(n)}\leq\frac{P}{2}\implies E(n)\leq E_0:=\frac{9P}{\sigma^2}.
\end{equation}

One can use the considerations above to create a good DI code. Let us now analyse its rate performance as a function of the error exponent $E(n)$. Similarly as in the previous section, the rate can be bounded using Equation~\eqref{eq:encoding_code_size} (extracted from Proposition~\ref{prop:size_bound}) and the radius at each layer, given in Equation~\eqref{eq:RR_radius}. Then,
\begin{equation}
\log N_\ell\geq\frac{n-(\ell-1)}{2}\log\left[\frac{2\sqrt{\rho n E(n)}}{3\sigma\sqrt{2nE(n)}}\left(\frac{P}{2\rho E(n)}\right)^\frac{1}{2^\ell}\right].
\end{equation}
Separate now the right-hand side of the equation above into a part with the linear prefactor and a part without it $\frac{n}{2}\log(\dots)-\frac{\ell-1}{2}\log(\dots)$. Then, the negative (second) term can be bounded by the error exponent lower bound $E(n)\geq\frac1n$ and the expression becomes:
\begin{equation}
\begin{split}
\log N_\ell&\geq\!\frac{n}{2}\left[\frac{1}{2^\ell}\log\left(\frac{9PL^2}{\sigma^2 E(n)}\right)\!-\!\log\left(9L\right)\right]\!-\!O\left[(\log n)^\frac{1}{2^\ell}\right]\!.
\end{split}
\end{equation}
The linear rate will then be bounded by:
\begin{align}
R(n):=\sum_{\ell=1}^L\frac{\log N_\ell}{n}\geq\, &\frac12\left(\sum_{\ell=1}^L\frac{1}{2^\ell}\right)\!\!\left(\log\frac{1}{E(n)}+\log\frac{P}{\sigma^2}\right)\nonumber\\
&-\frac{\log 9L}{2}-O\left(\frac{\sqrt{\log n}}{n}\right).\label{eq:RR_rate}
\end{align}

Now, for any $\epsilon>0$ there exists a threshold value $n_0$ and large enough $L$ such that for all $n>n_0$ one finds 
\begin{equation}
R(n)\geq\frac12 \left(\log\frac{1}{E(n)}+\log\frac{P}{9L\sigma^2} \right)-\epsilon.
\end{equation}
Therefore, the claimed rate is indeed achievable.
\end{proof}

Finally, notice that in the regime of small error exponents the term $\frac{1}{2}\log\frac{1}{E(n)}$ in the rate above will dominate. In particular, if $E(n)$ is a decreasing function of $n$, the second term will be of lower order. Thus, in this regime, the provided lower bound to the rate-reliability function matches the known upper bound [given here in Equation~\eqref{eq:RR_upper_bound}] at first order.

\section{Conclusions}\label{sec:conclusions}
In this paper, we have closed the gap between the lower and upper bounds on the linearithmic capacity of deterministic identification over additive white Gaussian noise channels. This gap has been repeatedly observed in DI over channels with continuous input \cite{CDBW:DI_classical, DI-fading}, a broad and important class of channels relevant both for practical applications \cite{DI-poisson_mc, 6G_Book, 6G_Book_2} and also because they exhibit linearithmic rate scaling, in contrast to the linear scaling of DI over DMCs \cite{SPBD:DI_power} and the linear scaling of message transmission \cite{Shannon:TheoryCommunication}.

In only a few special cases, where particular symmetries of the channel enable the design of optimal decoding strategies beyond typicality-based arguments, has the capacity been exactly characterized \cite{CDBW:DI_classical, qhtl_ISIT}. In more general settings, existing constructions rely on typical sets and fall short of achieving the known upper bounds. Indeed, it was conjectured in earlier work that the upper bound was probably tight and that the use of typicality in the code constructions may be the fundamental limitation preventing the attainment of capacity.

The construction proposed in \cite{galaxy-codes} introduced a novel geometric approach based on projection properties of the noise. Combining this with typicality-based codes led to an improved lower bound on the linearithmic capacity for Gaussian channels, increasing it from $\frac{1}{4}$ to $\frac{3}{8}$. Building on this idea, we have developed a fully projection-based construction that entirely eliminates the need for typicality and achieves the upper bound, establishing that $\dot C_{\text{DI}}(\cG) = \frac{1}{2}$.

A notable consequence of this result is that the DI capacity does not depend on the Gaussian channel parameters. While this is consistent with previous results, for instance, it was shown in \cite{CDBW:DI_classical} that the DI capacity of channels with continuous input and discrete output depends only on the geometric structure of the output probability set; it stands in sharp contrast with classical Shannon transmission. In the latter case, capacity depends explicitly on information-theoretic quantities such as mutual information and divergences. Specifically, for AWGN channels, it depends on the signal-to-noise ratio. This dependence makes the design of universal capacity-achieving codes infeasible in general transmission settings. In contrast, for DI over Gaussian channels, where channel parameters only affect second-order terms, we have shown that universal capacity-achieving codes do exist.

Finally, the gap between lower and upper bounds also appears in the study of the rate-reliability tradeoff \cite{CDBW:Reliability-TCOM, RRGauss-arXiv}. By analysing the tradeoff between rate and reliability in the proposed construction over a general range of error regimes, we have shown that this gap can also be closed in the regime of small error exponents (slowly vanishing errors), where the achievable performance matches the upper bound to first order. Importantly, this particular regime (small error exponents) is the regime of primary interest, as it is required to achieve superlinear scalings of the message size. Indeed, we observed that if the error exponent is a constant, the rate reduces to linear scaling.

A natural next step is to extend the present construction beyond AWGN channels. In particular, it would be interesting to investigate channels with continuous inputs and discrete outputs. One candidate is a Bernoulli-type channel model, which may serve as a tractable representative for broader families including the Poisson channel, commonly used to model emerging communication systems.

Another important direction is the computational complexity of the proposed scheme. Since decoding reduces to a collection of projections and distance computations, the construction appears amenable to polynomial-time implementation. Finally, the hierarchical structure of the code, which allows positive rate contributions from small input regions, may also have implications for secrecy and covertness.

%\clearpage

\bibliographystyle{ieeetr}
\bibliography{ID}

\end{document}